\newtheorem{theorem}{Theorem}
\theoremstyle{definition}
\newtheorem{definition}{Definition}
\theoremstyle{remark}
\newtheorem{lemma}[theorem]{Lemma}
\newcommand\reallywidehat[1]{%
\savestack{\tmpbox}{\stretchto{%
  \scaleto{%
    \scalerel*[\widthof{\ensuremath{#1}}]{\kern-.6pt\bigwedge\kern-.6pt}%
    {\rule[-\textheight/2]{1ex}{\textheight}}
  }{\textheight}%
}{0.5ex}}%
\stackon[1pt]{#1}{\tmpbox}%
}
\begin{document}

\title{Optimization of the Variational Quantum Eigensolver for Quantum Chemistry Applications}

\author{R.J.P.T. \surname{de Keijzer}}
\affiliation{Eindhoven University of Technology, P.~O.~Box 513, 5600 MB Eindhoven, The Netherlands}
\altaffiliation[Corresponding author: ]{r.d.keijzer@tue.nl }

\author{V.E. \surname{Colussi}}
\affiliation{Eindhoven University of Technology, P.~O.~Box 513, 5600 MB Eindhoven, The Netherlands} \affiliation{INO-CNR BEC Center and Dipartimento di Fisica, Università di Trento, 38123 Povo, Italy}

\author{B. \surname{\u{S}kori\'{c}}}
\affiliation{Eindhoven University of Technology, P.~O.~Box 513, 5600 MB Eindhoven, The Netherlands}

\author{S.J.J.M.F. \surname{Kokkelmans}}
\affiliation{Eindhoven University of Technology, P.~O.~Box 513, 5600 MB Eindhoven, The Netherlands}

\date{\today}

\begin{abstract}
This work studies the variational quantum eigensolver algorithm, which is designed to determine the ground state of a quantum mechanical system by combining classical and quantum hardware. Two methods of reducing the number of required qubit manipulations, prone to induce errors, for the variational quantum eigensolver are studied. First, we formally justify the multiple $\mathbb{Z}_2$ symmetry qubit reduction scheme first sketched by Bravyi et al. [arXiv:1701.08213 (2017)]. Second, we show that even in small, but non-trivial systems such as H$_2$, LiH, and H$_2$O, the choice of entangling methods (gate based or native) gives rise to varying rates of convergence to the ground state of the system.  Through both the removal of qubits and the choice of entangler, the demands on the quantum hardware can be reduced. We find that in general, analyzing the VQE problem is complex, where the number of qubits, the method of entangling, and the depth of the search space all interact. In specific cases however, concrete results can be shown, and an entangling method can be recommended over others as it outperforms in terms of difference from the ground state energy.
\end{abstract}

\maketitle

\section{Introduction}
\label{sec:section1}
Presently, quantum computing is in the noisy intermediate-scale quantum (NISQ) era \cite{Preskill_2018}, where the available universal quantum computers cannot outperform their classical counterparts except for a few, specific, well-designed cases \cite{google}.  
However, even in the NISQ era quantum computers can be used in practice. One such application
is the variational quantum eigensolver (VQE) algorithm, first proposed in a paper by Peruzzo and McClean in 2014 \cite{firstmention}. The algorithm is hybrid, e.g. at certain points a quantum processor unit (QPU) is addressed. The VQE algorithm has shown proof of concept for small molecules with several different designs of qubits \cite{Kandala,trapion1,photonics1,majorana1,googlehartree}. The aim of the algorithm is to determine the lowest eigenvalue of a quantum Hamiltonian, which can be equated to finding the ground state energy of a molecule. In recent years VQE has been implemented on
many different qubit systems and has become a highly active
area of research (c.f. Refs.~\cite{overview1,overview2} for recent comprehensive overviews).\\

To illustrate the VQE algorithm, a schematic of the relevant circuit diagram is shown in Fig.~\ref{fig:diagram}. To determine the ground state energy of a Hamiltonian the VQE algorithm maps the fermionic degrees of freedom of molecules onto a set of qubits. These qubits can then be manipulated into a trial state with a certain energy by applying a sequence of entanglement operators $U_{ent}$ and individual rotations $U_{i,j}$. The length of this sequence is the depth $d$ which plays an important role in determining which part of the state space can be addressed by the VQE. The entanglement operator $U_{ent}$ is a multi-qubit gate that ensures entangled states can be reached and thus its choice is critical in exploiting the quantum nature of the method. This operator is necessary for utilizing the computational powers of the multi-qubit structure. By applying a problem dependent Hamiltonian to the trial state, its energy can be measured using a QPU. Based on these measurements the classical part of the computer proposes a new trial state (new rotations) by means of a classical optimization algorithm \cite{penalty,ibmsteps}. Because of the limited hardware of quantum computers, each qubit manipulation has a non-negligible probability of error. Therefore, it is important to minimize the number of necessary manipulations.

\begin{figure}[H]
\includegraphics[scale=0.28]{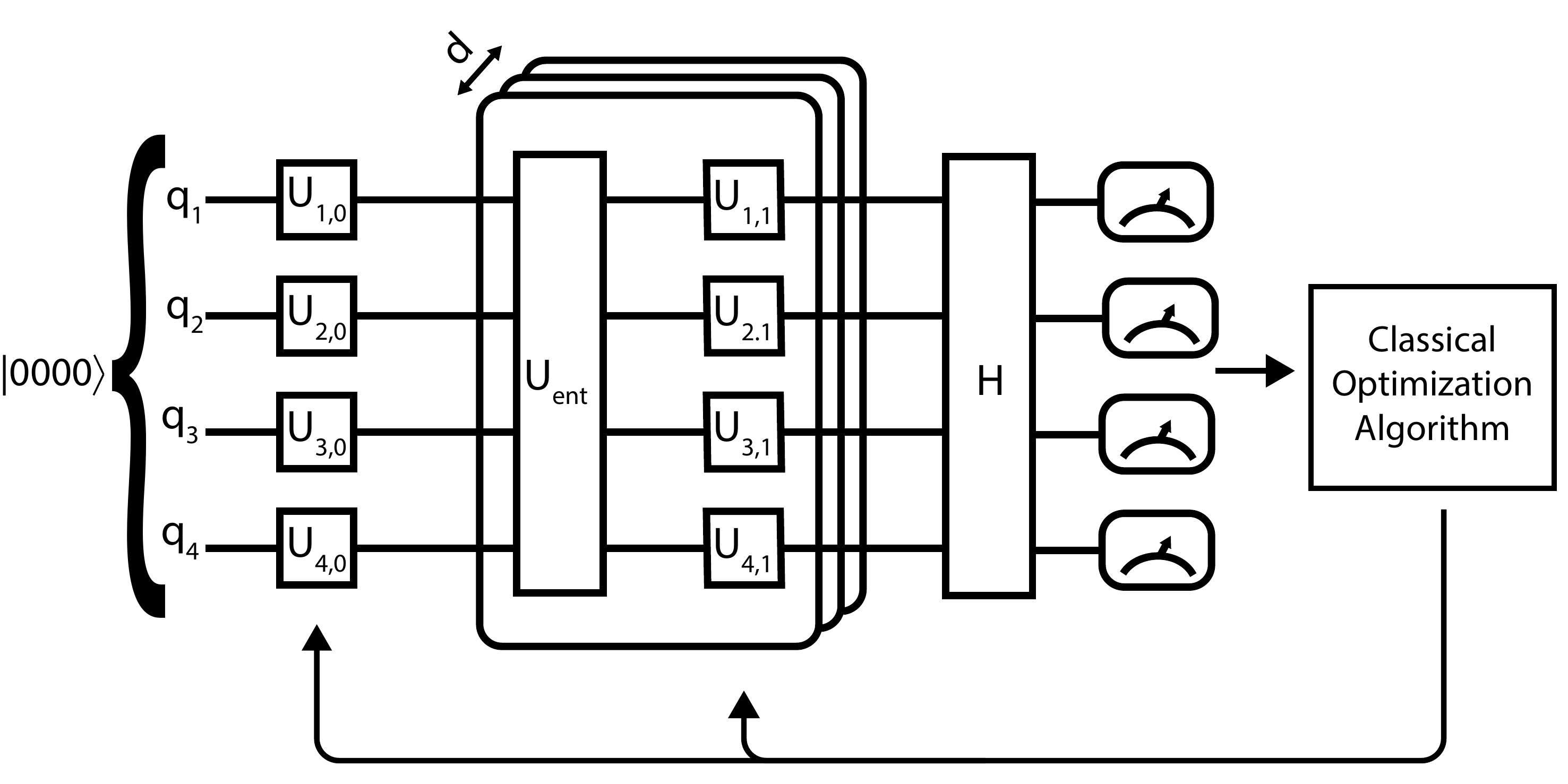} 
\caption{Schematic  circuit  diagram  of  the  VQE  algorithm.  The  qubits  are manipulated by a sequence of entanglement operators $U_{ent}$ and individual rotations $U_{i,j}$ into a prepared trial state.
}
\label{fig:diagram}
\end{figure}

This paper examines two ways of achieving this minimization: (i) by reducing the number of qubits used to describe the problem as proposed by Bravyi et al. \cite{taper} and (ii) by reducing the number of qubit manipulations using entanglement methods native to the physical system. Both of these methods are directed to optimizing the quantum hardware rather than the classical side. \\

The layout of this paper is as follows. In Secs.~\ref{sec:section2} and \ref{sec:section3} the VQE algorithm and the method of simulation are described. Sec.~\ref{sec:section4} gives a constructive proof of the multiple $\mathbb{Z}_2$ qubit reduction symmetries scheme by Bravyi et al. \cite{taper} Sec.~\ref{sec:section5} describes multi-qubit entangling methods used in the VQE analysis. Finally, in Sec.~\ref{sec:section7} we compare and contrast the entanglement methods and their results on small molecule VQE test case problems. 

\section{Quantum measurements and simulation setup}
\label{sec:section2}
The VQE algorithm exploits the variational principle and aims to find a set of parameters minimizing the energy of a quantum system. The states of a system are mapped to qubits and the classical optimization algorithm optimizes the variational calculation. To do so for Hamiltonians of small molecules requires a characterization of their electron spin-orbitals. For this the Hartree-Fock method is applied, which reformulates the Hamiltonian in first-quantization form \cite{chemicalaccuracy}. The Born-Oppenheimer approximation is applied which ensures that the nuclear contribution to the Hamiltonian is constant \cite{chemicalaccuracy}. The Hamiltonian $H_1$ under the Born-Oppenheimer approximation becomes
\begin{equation}
\begin{aligned}
    H_1&=-\sum_i \frac{\nabla^2_{\vec{r}_i}}{2}-\sum_{i,j}\frac{Z_i}{|\vec{R}_i-\vec{r}_j|}\\&+\sum_{i,j>i}\frac{1}{|\vec{r}_i-\vec{r}_j|}+\sum_{i,j>i}\frac{Z_iZ_j}{|\vec{R}_i-\vec{R}_j|},
    \end{aligned}
\end{equation}
where, $\vec{R}_i$ and $Z_i$ are the position and charge of nucleus $i$. $\vec{r}_i$ is the position of electron $i$. In VQE, the Hamiltonian is described in Hartree units \footnote{ Hartree units are also known as atomic units where the reduced constant of Plank $\hbar$, the electron mass $m_e$, the elementary charge $e$ and Coulomb's constant $k_e=1/4\pi\epsilon_0$ are equal to unity. In this system the Bohr radius $a_0=4\pi\epsilon_0\hbar^2/m_e e^2$ is also equal to unity. Hartree units are often used in molecular level calculations \cite{mcquarrie2008quantum}.}. The Hamiltonian is reformulated in second-quantization form \cite{secondquantization} $H_2$, by projecting it onto a finite set of orthogonal spin-orbital modes $\{\phi_1\}_{i=1}^n$, as
\begin{equation}
\label{eq:secondquant}
    H_2=V_{nn}+\sum_{p,q}h_{pq}a^\dagger_p a_q+\frac{1}{2}\sum_{p,q,r,s}h_{pqrs}a^\dagger_p a^\dagger_q a_r a_s,
\end{equation}
where the nuclei-nuclei interaction potential $V_{nn}=\sum_{i,j>i}Z_iZ_j/|\vec{R}_i-\vec{R}_j|$ is constant and the annihilation and creation operators $a_i^\dagger$ and $a_i$ work on a set of orthogonal spin-orbital modes $\{\phi_1\}_{i=1}^n$ determined by the linear combination of atomic orbitals (LCAO) method and which induce the fermionic algebra \cite{mcquarrie2008quantum}. The number of considered orbitals determines the complexity of the problem and with that the number of required qubits. In many practical cases only a select few low energy orbitals are considered while all higher energy atomic orbitals are ignored. The molecular integrals $h_{pq}$ and $h_{pqrs}$ are referred to as the one-electron and two-electron integrals respectively 
\begin{equation}
\begin{aligned}
    h_{pq}&=\int S^*_p(\omega)S_q(\omega) d \omega \\
    &\enskip\times\int_{\mathbb{R}^3} \phi^*_p(\vec{r})\left(\frac{\nabla^{2}_{\vec{r}}}{2}-\sum_i\frac{Z_i}{|\vec{R_i}-\vec{r}|} \right)\phi_q(\vec{r}) d\vec{r},
\end{aligned}
\label{eq:oneelectron}
\end{equation}

\begin{equation}
\begin{aligned}
    h_{pqrs}=&\int\int S^*_p(\omega_1)S_s(\omega_1)S^*_q(\omega_2)S_r(\omega_2)d \omega_1 d\omega_2\\
    &\times\int_{\mathbb{R}^3} \int_{\mathbb{R}^3} \frac{\phi^*_p(\vec{r_1})\phi^*_q(\vec{r_2})\phi_r(\vec{r_2})\phi_s(\vec{r_1})}{|\vec{r_1}-\vec{r_2}|}d\vec{r_1}d\vec{r_2},
\end{aligned}
\label{eq:integrals}
\end{equation}
where $S_i(\omega)$ is the spin part of the spin orbitals \footnote{In this work the integrals in Eq.~\ref{eq:oneelectron} and Eq.~\ref{eq:integrals} have been determined using  quantum computation libraries \textit{OpenFermion} \cite{openfermion} and \textit{Psi4} \cite{psi4} in the STO-3G basis.}.\\

Some important concepts in quantum computing are introduced below. These are required to understand the VQE method and are used in proofs that follow.
\begin{definition}{Pauli Matrices}
\begin{equation}
    I:= \begin{pmatrix}
  1 & 0 \\
  0 & 1 \\
 \end{pmatrix},\enskip
 X:= \begin{pmatrix}
  0 & 1 \\
  1 & 0 \\
  \end{pmatrix},\enskip
 Y:= \begin{pmatrix}
  0 & -i \\
  i & 0 \\
 \end{pmatrix},\enskip
  Z:= \begin{pmatrix}
  1 & 0 \\
  0 & -1 \\
 \end{pmatrix}.
\end{equation}
\end{definition}

\begin{definition}{$m$-fold Single Qubit Operator}
\begin{equation}
    \sigma^\rho_j:=I^{\otimes (j-1)}\otimes\rho\otimes I^{\otimes (m-j)},\quad j\in{1,...,m}, \quad \rho\in\{X,Y,Z\}.
\end{equation}
\end{definition}

\noindent These $m$-fold single qubit operator work on a $m$-qubit state but only change the state of one of the qubits.

\begin{definition}{$m$-fold Pauli operator space $P_m$}
\begin{equation}
    P_m:=\{\pm 1,\pm i\}\times\{I,X,Y,Z\}^{\otimes m}.
\end{equation}
\end{definition}

\noindent $P_m$ forms a group under operator multiplication since the Pauli operators $P_1$ form a group. Note that since the Pauli matrices are a complete set for the Hilbert space of complex $2\times2$ matrices, $P_m$ is a complete set for the Hilbert space of complex $2^m\times2^m$ matrices\\

The Jordan-Wigner transformation can be used to map the creation and annihilation operators of Eq.~\eqref{eq:secondquant} to combinations of Pauli matrices \cite{transforms}. The transformation is given by 
\begin{equation}
    a_j=I^{\otimes j-1}\otimes\sigma^+\otimes Z^{\otimes m-j},
\end{equation}
\begin{equation}
    a^\dagger_j=I^{\otimes j-1}\otimes\sigma^-\otimes Z^{\otimes m-j},
\end{equation}

\noindent where $m$ is the number of qubits. The $\sigma^\pm$ operators are defined as
\begin{equation}
    \sigma^\pm:=\frac{X\pm iY}{2}.
\end{equation}

It can be shown that the transformed operators $a_j$ and $a^\dagger_j$ obey the canonical commutation relations for fermions. Since this transformation is local the electron configuration can be determined from the qubit state. Since the number of required qubits is a key factor determining the required effort for calculation on both physical systems and in simulations it is desirable to reduce the number of qubits where possible. The Hamiltonian as constructed in the Jordan-Wigner transformation commutes with the number spin up and number spin down operators $N_{\uparrow}$ and $N_{\downarrow}$ so we have
\begin{equation}
\label{eq:project}
    [H,N_{\uparrow}]=[H,N_{\downarrow}]=0.
\end{equation}
This symmetry can be used to taper off 2 qubits based respectively on particle and spin number as described by Bravyi et al. \cite{taper}. This effectively projects the Hamiltonian on those states describing the right number of electrons and spin. This method can be extended to taper off more qubits if additional symmetries are present in the Hamiltonian, which is the subject of Sec.~\ref{sec:section4}.\\

Because $P_m$ is a complete set for the Hilbert space of complex $2^m\times2^m$ matrices, any $m$-qubit Hamiltonian $H$ can be written as a linear combination of elements of the $m$-fold Pauli operators in $P_m$ as
\begin{equation}
\begin{aligned}
    H&=\sum_k h_k \boldsymbol{\sigma}^k,\quad \boldsymbol{\sigma}^k=\sigma^k_1\otimes\sigma^k_2\otimes...\otimes\sigma^k_m\in P_m, \quad\\
    h_k&=\langle H,\boldsymbol{\sigma}^k\rangle_{\mathcal{F}}
    =\sum_{i,j}\overline{H^T_{ij}}\boldsymbol{\sigma}^k_{ij}=tr(H^\dagger\boldsymbol{\sigma}^k). 
    \end{aligned}
    \label{eq:frobenius}
\end{equation}

\noindent This is a finite dimensional Fourier decomposition with the Frobenius inner product $\langle\cdot,\cdot\rangle_\mathcal{F}$ on matrix representations of the operators. Note that in Eq.~\eqref{eq:frobenius} $\sigma_{ij}^k$ is a matrix element, not a Pauli matrix working on a qubit. Any rotation on a single qubit can be written as
\begin{equation}
   R(\alpha,\beta,\gamma,\delta)= e^{i\alpha}Z_\beta X_\gamma Z_\delta,
\end{equation}

\noindent where $\alpha,\beta,\gamma,\delta \in [0,2\pi]$. The operators $X_\phi$ and $Z_\phi$ denote a rotation of an angle $\phi$ around the given axis and are given by
\begin{equation}
\label{eq:rotation1}
\begin{aligned}
    &X_\phi=\cos(\phi/2)I+\sin(\phi/2)X, \\ &Y_\phi=\cos(\phi/2)I-i\sin(\phi/2)Y,\\
    &Z_\phi=\cos(\phi/2)I+\sin(\phi/2)Z.
    \end{aligned}
\end{equation}
\newpage
\section{Trial State Initialization and Measurement}
\label{sec:section3}
In order to initialize a trial state the individual qubits have to be put in a requested state. Following Kandala et al. \cite{Kandala}, we assume control over the rotations of the individual qubits and let the qubits entangle by a passive \textit{always-on} interaction described by some drift Hamiltonian $H_0$. As long as the rotations can be performed in a relatively small time frame compared to the time between two rotations $t_0$ we can model the evolution of the system as a sequence of rotations and entanglement operations $U_{ent}=e^{iH_0t_0}$. This method is called the hardware efficient ansatz \cite{ansatzes} and is especially relevant in NISQ machines where there will likely be a passive interaction \cite{Preskill_2018}. If such an interaction is not present and the qubit system possesses some form of control in entanglement operations then there is more freedom in choosing what $U_{ent}$ should look like. The depth $d$ of a state preparation is defined as the length of the  sequence of rotations and entanglement operations, or equivalently the number of entanglement operations applied. Each state on the Bloch sphere of a single qubit can be reached with a $ZXZ$-rotation. Such a rotation on a qubit $q$ at a depth $i$ can be written as
\begin{equation}
    U_{q,i}(\vec{\theta})=Z_{\theta_1^{q,i}}X_{\theta_2^{q,i}}Z_{\theta_3^{q,i}},
\end{equation}

\noindent where $\vec{\theta}$ has three elements for every qubit and depth pair. In total $3d+3$ rotations would be performed on every qubit. However, in this paper the initial state will always be the vacuum state. The first $Z$-rotation can therefore be omitted. Thus the parameter vector $\vec{\theta}$ is an element of the search space $[0,2\pi]^{\otimes D}$, where $D=(3d+2)m$. To reach the trial state described by the parameter vector $\vec{\theta}$ the initial state $|\psi_{init}\rangle$ is rotated and entangled. The entanglement operator is $U_{ent}$. The prepared trial state after all rotations and entanglements will be
\begin{equation}
\label{equationstatepreperation}
\begin{aligned}
    |\Psi(\vec{\theta})\rangle=&\left(\prod^m_{q=1}U_{q,d}(\vec{\theta})\times U_{ent}\right)\times\left(\prod^m_{q=1}U_{q,d-1}(\vec{\theta})\times U_{ent}\right)\times...\\
    &\times\left(\prod^m_{q=1}U_{q,0}(\vec{\theta})\right)|\psi_{init}\rangle.
\end{aligned}
\end{equation}

With the trial state prepared, the expectation value of $H$ can be measured on a quantum computer as 
\begin{equation}
\label{hamiltonianeval}
\begin{aligned}
    \langle H \rangle_{\vec{\theta}}=&\langle\Psi(\vec{\theta})|H|\Psi(\vec{\theta})\rangle=\sum_k h_k\langle\Psi(\vec{\theta})|\boldsymbol{\sigma}^k|\Psi(\vec{\theta})\rangle\\
    &= \sum_k h_k \langle\Psi(\vec{\theta})|\sigma_1^k\otimes \sigma_2^k\otimes...\otimes\sigma_m^k|\Psi(\vec{\theta})\rangle.
    \end{aligned}
\end{equation}
In quantum computers these expectation values require sufficient measurements. In simulation the expectation values require an inner product of matrices and vectors. 
\newpage

\section{$\mathbb{Z}_2$ symmetry qubit reduction scheme}
\label{sec:section4}
This section formally justifies the qubit removal process for multiple $\mathbb{Z}_2$ symmetries as described by Bravyi et al. \cite{taper}. In this process the $\mathbb{Z}_2$ symmetries in Hamiltonians \cite{taper}, often originating from geometric symmetries in the molecule, are exploited to reduce the number of qubits necessary to describe the Hamiltonian. These $\mathbb{Z}_2$ symmetries include the spin and electron number symmetries and have recently been linked to more physically intuitive point group symmetries \cite{pointgroup}.  The reductions will substantially lower computation power and time. The qubit reductions based on $\mathbb{Z}_2$ symmetries related to spin and electron number have found substantial use in recent works \cite{overview1,bravyi1,bravyi2} and have been included in quantum computing libraries, e.g. \textit{OpenFermion} \cite{openfermion}. Implementations of the additional  $\mathbb{Z}_2$ symmetries discussed in this work have are limited in number, most likely because there are not many or even no such symmetries present in the small systems that can analyzed in classical simulations. However, in larger systems they can lead to substantial reduction in hardware requirement \cite{taper}. The main idea of the qubit reduction scheme is to transform a $m$-qubit Hamiltonian in such a way that the Pauli operators in the decomposition will all have either $I$ or $X$ as the last $r\in\mathbb{N}$ factors (e.g. YXXZIX and ZXYZXX). If the $i$-th term of every Pauli operator in the decomposition is $I$ or $X$ then $[H,\sigma^x_i]=0$, so the Hamiltonian $H$ and $\sigma^x_i$ commute and thus share common eigenvectors. One can therefore replace the $i$-th factors of the operators with the eigenvalues $\pm 1$ and thus taper off a qubit \cite{taper}. We ignore the phases of the $m$-fold Pauli operators because these do not influence commutation. Doing so every $m$-fold Pauli operator becomes its own inverse.\\

In the following proof, we place the qubit reduction scheme sketched in Sec.VII of the paper by Bravyi et al. \cite{taper} on a mathematically rigorous level.  We begin first with some definitions and lemmas required to provide a constructive proof of the scheme.  

\begin{definition}{Symmetry Group}

\noindent A symmetry group $S$ of a group $G$ is an abelian subgroup of $G$ such that $-I\not\in S$.
\end{definition}

\begin{definition}{Center and Centralizer}

\noindent Let $G$ be a group. The center $Z(G)$ of $G$ is the set of elements commuting with every element of $G$.
\begin{equation}
    Z(G)=\{z\in G \quad | \quad zg=gz \quad \forall g\in G\}.
\end{equation}
The centralizer $C_G(A)$ of $A\subseteq G$ is the set of all elements in $G$ which commute with every element of $A$.
\begin{equation}
    C_G(A)=\{g\in G \quad | \quad ag=ga \quad \forall a\in A\}.
\end{equation}
\end{definition}

\begin{lemma}{Symmetry group commuting with $2^m\times 2^m$ Hamiltonian}\\
\noindent Let $S\subseteq P_m$ be a symmetry group. $S$ commutes with a $2^m\times 2^m$ Hamiltonian $H=\sum_k h_k \boldsymbol{\sigma}^k$ if $S$ commutes with every Pauli operator in the decomposition of H. Thus, let $H = \sum_k h_k \boldsymbol{\sigma}^k$.
If $\forall k$   $S$ commutes with $\boldsymbol{\sigma}^k$ then  $S$ commutes with $H$.
\end{lemma}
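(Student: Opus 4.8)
The plan is to reduce the claim to a one-line computation resting on the bilinearity of matrix multiplication, since ``$S$ commutes with $H$'' means $sH = Hs$ for every $s \in S$. First I would fix an arbitrary symmetry element $s \in S$ and record the hypothesis in the form $s\boldsymbol{\sigma}^k = \boldsymbol{\sigma}^k s$ for every index $k$ in the decomposition $H = \sum_k h_k \boldsymbol{\sigma}^k$.

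The heart of the argument is then a direct computation: insert the Pauli decomposition into the product $sH$ and distribute, using the commutation hypothesis term by term. Concretely,
\begin{equation}
sH = s\sum_k h_k \boldsymbol{\sigma}^k = \sum_k h_k\, s\boldsymbol{\sigma}^k = \sum_k h_k\, \boldsymbol{\sigma}^k s = \left(\sum_k h_k \boldsymbol{\sigma}^k\right)s = Hs.
\end{equation}
The second equality uses that each coefficient $h_k \in \mathbb{C}$ commutes with the matrix $s$ and that left-multiplication by $s$ is linear; the third equality applies the commutation hypothesis to each summand; and the fourth uses that right-multiplication by $s$ is linear. Since $s \in S$ was arbitrary, I would conclude that every element of $S$ commutes with $H$, which is exactly the assertion.

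I expect no genuine obstacle here: the statement is essentially the observation that the set of operators commuting with a fixed matrix forms a linear subspace, so commuting with each term $\boldsymbol{\sigma}^k$ automatically forces commuting with any linear combination of them. The only point worth flagging explicitly is that the $h_k$ are complex scalars and therefore move freely through the products, so the interchange of $s$ with the sum and with each $h_k$ is justified; everything else is a routine application of distributivity. Note that the full group structure of $S$ and the condition $-I \notin S$ play no role in this particular lemma — they are only relevant later, when $S$ is used to taper qubits — so I would not invoke them in the proof.
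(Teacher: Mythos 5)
Your proposal is correct and follows essentially the same argument as the paper: fix an arbitrary $s \in S$, distribute $s$ through the sum $\sum_k h_k \boldsymbol{\sigma}^k$, commute it past each term, and reassemble to get $sH = Hs$. The paper's proof is just this computation written in one line; your extra remarks on linearity and the irrelevance of the condition $-I \notin S$ are accurate but not needed.
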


\begin{proof}
$sH=s\sum_k h_k \boldsymbol{\sigma}^k=\sum_k h_k s \boldsymbol{\sigma}^k=\sum_k h_k \boldsymbol{\sigma}^k s=Hs$
\end{proof}

\begin{lemma}{Redefining of generators}\\
Let the generators $\tau_1,\tau_2,...,\tau_r$ generate the abelian group $S=\langle\tau_1,\tau_2,...,\tau_r\rangle\subset P_m$ and let $i,j\in\{1,....r\}$ with $i\neq j$. Then $S=\langle\tau_1,\tau_2,...,\tau_i,...,\tau_j,...,\tau_r\rangle=\langle\tau_1,\tau_2,...,\tau_i,...,\tau_{j-1},\tau_i\tau_j,\tau_{j+1},...,\tau_r\rangle$
\end{lemma}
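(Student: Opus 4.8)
The plan is to prove the two generating sets yield the same subgroup by establishing mutual inclusion of the subgroups they generate. Write $S' = \langle \tau_1, \ldots, \tau_i, \ldots, \tau_{j-1}, \tau_i\tau_j, \tau_{j+1}, \ldots, \tau_r\rangle$ for the redefined group; the goal is to show $S = S'$. The key structural observation is that the replacement keeps $\tau_i$ in the list while only $\tau_j$ is altered, and this retention of $\tau_i$ is precisely what makes the original $\tau_j$ recoverable from the new set.

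First I would show $S' \subseteq S$. Every generator of $S'$ other than $\tau_i\tau_j$ is already a generator of $S$ and hence lies in $S$. The remaining generator $\tau_i\tau_j$ is a product of two elements of $S$, so by closure of the group $S$ it too lies in $S$. Since $S'$ is by definition the smallest subgroup containing its generators and $S$ is a subgroup containing all of them, we conclude $S' \subseteq S$.

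Next I would show the reverse inclusion $S \subseteq S'$ by checking that each generator of $S$ lies in $S'$. For every index $k \neq j$ the generator $\tau_k$ is unchanged and is already a generator of $S'$, hence lies in $S'$. For the altered index, I recover $\tau_j$ from the new generators via $\tau_j = \tau_i^{-1}(\tau_i\tau_j)$; both $\tau_i$ (and therefore $\tau_i^{-1}$) and $\tau_i\tau_j$ belong to $S'$, so $\tau_j \in S'$. In the present setting, where phases are ignored and every $m$-fold Pauli operator is its own inverse, this simplifies to $\tau_j = \tau_i(\tau_i\tau_j)$. Thus all generators of $S$ lie in $S'$, giving $S \subseteq S'$, and combined with the first inclusion, $S = S'$.

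I do not expect a genuine obstacle, as the argument is elementary set inclusion; the only point requiring care is to note explicitly that $\tau_i$ must be \emph{retained} among the generators for $\tau_j$ to be expressible in terms of the new set. The abelian hypothesis is not strictly essential to this particular replacement, since $\tau_i$ is kept, but it is consistent with the surrounding framework and ensures the products involved are unambiguous.
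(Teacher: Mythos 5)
Your proof is correct and follows essentially the same route as the paper: mutual inclusion, with closure giving $S' \subseteq S$ and the recovery $\tau_j = \tau_i(\tau_i\tau_j)$ giving $S \subseteq S'$. Your version is in fact slightly more careful than the paper's, since you write $\tau_i^{-1}$ explicitly before invoking the self-inverse property, whereas the paper uses $\tau_i\tau_i\tau_j=\tau_j$ directly.
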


\begin{proof}
Since $\tau_i,\tau_j\in S$ it holds that $\tau_i\tau_j\in S$. Therefore, $\langle\tau_1,\tau_2,...,\tau_i,...,\tau_{j-1},\tau_i\tau_j,\tau_{j+1},...,\tau_r\rangle\subseteq\langle\tau_1,\tau_2,...,\tau_i,...,\tau_j,...,\tau_r\rangle$. Now if $\tau_i\tau_j,\tau_i \in S$ then $\tau_i\tau_i\tau_j=\tau_j\in S$. Therefore $\langle\tau_1,\tau_2,...,\tau_i,...,\tau_j,...,\tau_r\rangle\subseteq \langle\tau_1,\tau_2,...,\tau_i,...,\tau_{j-1},\tau_i\tau_j,\tau_{j+1},...,\tau_r\rangle$.
\end{proof}

\begin{lemma}{Commutation of product}\\
Let $A,B_1,B_2$ and $B_3$ be operators such that $A$ commutes with $B_1$ and anti-commutes with $B_2$ and $B_3$.  Then $B_1B_2$ and $B_2B_1$ anti-commute with A. $B_2B_3$ and $B_3B_2$ commute with A.
\end{lemma}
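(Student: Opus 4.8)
The plan is to prove the statement by direct computation with commutators and anti-commutators, exploiting the fact that every $m$-fold Pauli operator is its own inverse (as noted in the text) and that any two such operators either commute or anti-commute. The key algebraic identities I would use are the following: if $A$ commutes with $B_1$ then $AB_1 = B_1 A$, and if $A$ anti-commutes with $B_2$ then $AB_2 = -B_2 A$. I want to slide $A$ past a product $B_iB_j$ and count the sign picked up, so I would track how many anti-commuting factors $A$ must cross.

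First I would handle $B_1B_2$. I compute $A(B_1B_2) = (AB_1)B_2 = (B_1A)B_2 = B_1(AB_2) = B_1(-B_2A) = -(B_1B_2)A$, so $A$ anti-commutes with $B_1B_2$. The same argument applied to $B_2B_1$, pulling $A$ through the $B_2$ factor (one sign flip) and the $B_1$ factor (no sign flip), gives $A(B_2B_1) = -(B_2B_1)A$, so $A$ anti-commutes with $B_2B_1$ as well. Next I would treat $B_2B_3$: here $A$ must cross two anti-commuting factors, so $A(B_2B_3) = (-1)^2 (B_2B_3)A = (B_2B_3)A$, meaning $A$ commutes with $B_2B_3$; and symmetrically $A(B_3B_2) = (B_3B_2)A$, so $A$ commutes with $B_3B_2$.

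The underlying principle making all four cases uniform is that the sign accumulated when commuting $A$ past a product is $(-1)^{(\text{number of anti-commuting factors})}$. A product of one commuting and one anti-commuting operator yields an odd count, hence anti-commutation; a product of two anti-commuting operators yields an even count, hence commutation. I would present the four computations compactly, perhaps displaying the $B_1B_2$ case in full in a single non-broken display line and then remarking that the remaining three cases follow identically by the same sign-counting.

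There is essentially no obstacle here; the statement is a routine consequence of the bilinearity of the commutator sign under operator multiplication. The only point requiring a small amount of care is bookkeeping the signs correctly and being explicit that one uses associativity of operator multiplication freely when regrouping $A(B_iB_j)$ as $(AB_i)B_j$. I would make sure the presentation distinguishes the single sign flip (odd product) from the double sign flip (even product) so that the reader sees why the commuting/anti-commuting dichotomy splits exactly as claimed.
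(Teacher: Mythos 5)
Your proposal is correct and follows essentially the same route as the paper: a direct computation sliding $A$ past each factor and tracking the sign flips, giving one flip for $B_1B_2$ and $B_2B_1$ (anti-commutation) and two flips for $B_2B_3$ and $B_3B_2$ (commutation). The only difference is cosmetic — you phrase it as a general sign-counting principle, and incidentally your conclusion for the last two cases is stated correctly, whereas the paper's final sentence contains a typo ("anti-commute" where it means "commute") despite the computation being right.
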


\begin{proof}
$AB_1B_2=B_1AB_2=-B_1B_2A$ and $AB_2B_1=-B_2AB_1=-B_2B_1A$. So indeed $B_1B_2$ and $B_2B_1$ anti-commute with A. Furthermore, $AB_2B_3=-B_2AB_3=B_2B_3A$ and $AB_3B_2=-B_3AB_2=B_3B_2A$. So indeed $B_2B_3$ and $B_3B_2$ anti-commute with A. 
\end{proof}

An important result of stabilizer theory \cite{brun2019quantum} is that a symmetry group commuting with a $m$-qubit Hamiltonian will at most require $m$ generators. This is a key fact in the qubit reduction scheme (the exact proof of this is given by Fujii \cite{Nqbitsnec}). A proof sketch would take the following form. Each symmetry group commuting with a $m$-qubit Hamiltonian is a symmetry group of $P_m$. Each symmetry group generator projects a $m$-qubit basis state onto one of its eigenvalues $\pm1$. Therefore, each of the generators divides the Hilbert space of $m$-qubits into two subspaces. The generators share a common set of eigenvectors as they all commute. The generators thus partition the $m$-qubit space into $2^{\mathrm{\#generators}}$ subspaces. Since there are only $2^m$ basis states, $\mathrm{\#generators}\leq m$. 
\\

Now the algorithm for the qubit reduction is given below. For notation purposes the single qubit operators $X,Y$ and $Z$ on qubit $i$ are sometimes denoted as $\sigma_i^X, \sigma_i^Y$ and $\sigma_i^Z$ respectively. The first goal is to find a symmetry group $S$, commuting with $H$, which has a maximal number of generators. First off, every Pauli operator $\boldsymbol{\sigma}^k\in P_m$ will be encoded as a binary row-vector $(a_x|a_z)$ of length $2m$ as follows
\begin{equation}
    \sigma(a_x|a_z)=\sigma(a_{x1},...,a_{xm}|a_{z1},...,a_{zm})=\prod_{i=1}^m(\sigma_i^x)^{a_{xi}}(\sigma_i^z)^{a_{zi}},
\end{equation}
where $a_{xi},a_{zi}\in\{0,1\}$. As an example, $IZXY=(0011|0101)$. Using the standard inner product on $2m$-dimensional vectors in $\mathbb{Z}_2$, resulting in a value 0 or 1, the following commutation relation holds
\begin{equation}
\label{commutation}
    \sigma(a_x|a_z)\sigma(b_x|b_z)=(-1)^{a_x\cdot b_z+a_z\cdot b_x}\sigma(b_x|b_z)\sigma(a_x|a_z).
\end{equation}
Following this, a scalar product $\times$ is defined on the vector space of binary vectors of length $2m$ as
\begin{equation}
\label{vectorproduct}
     a\times b:=a_x\cdot b_z+a_z\cdot b_x\in\{0,1\}.
\end{equation}
This vector product is symmetrical and linear in both $a$ and $b$. By evaluating this vector product it can easily be determined if the corresponding operators will commute or anti-commute, see Eq.~\eqref{commutation}.\\\\
The set of $k$ Pauli operators $\boldsymbol{\sigma}^{1},..., \boldsymbol{\sigma}^k$ appearing in the decomposition of $H$ can be represented by vectors $(a_x^1|a_z^1),...,(a_x^k|a_z^k)$ which can in turn be represented in a $k\times 2m$ matrix $G$ as
\begin{equation}
    G=\left[ \begin{array}{ccc|ccc}
       \cdots & a_{x}^1 & \cdots & \cdots & a_{z}^1 & \cdots  \\
       \cdots & a_{x}^2 & \cdots & \cdots & a_{z}^2 & \cdots  \\
        & \vdots &  & & \vdots &  \\
        \cdots & a_{x}^r & \cdots & \cdots & a_{z}^r & \cdots  \\
     \end{array}\right]
\end{equation}
Now if a Pauli operator $\sigma(b_x|b_z)$ were to commute with every term in $H$ it must be that $G\times(b_x|b_z)=0$, where the vector product $\times$, defined in Eq.~\eqref{vectorproduct}, is taken for every row in $G$. From the theory of stabilizer codes the concept of a parity check matrix $E$ is borrowed \cite{stabilizertheory}. This matrix $E$ is defined as 
\begin{equation*}
    E=\left[ \begin{array}{ccc|ccc}
       \cdots & a_{z}^1 & \cdots & \cdots & a_{x}^1 & \cdots  \\
       \cdots & a_{z}^2 & \cdots & \cdots & a_{x}^2 & \cdots  \\
        & \vdots &  & & \vdots &  \\
        \cdots & a_{z}^r & \cdots & \cdots & a_{x}^r & \cdots  \\
     \end{array}\right]
\end{equation*}
Note that $EG^{T}=0$. Now if the operator $\sigma(b_x|b_z)$ were to be part of the symmetry group $S$ then $(b_x|b_z)\in \text{ker}(E)$. Next a basis for $\text{ker}(E)$ can be constructed. Let $d:=\text{dim}(\text{ker}(E))$. These basis vectors $b^1,...,b^d$ give rise to Pauli operators $\boldsymbol{\sigma}^1,...,\boldsymbol{\sigma}^d\in P_m$. However, it is not guaranteed that these Pauli operators mutually commute and thus will form a set of generators for a symmetry group $S$. Commutation can be enforced by applying the algorithm below to obtain a set of new vectors $g^1,...,g^r\in \text{span}(b^1,...,b^d)$ with $r\leq d$. This is the part not formally justified in the paper by Bravyi et al \cite{taper}. The operators corresponding to these vectors will be linearly independent and commuting. We are thus looking for generators of a maximal Abelian subgroup of $\text{ker}(E)$. Hence
\begin{equation}
    \forall i,j\in\{1,...,r\}: g^i\times g^j=0.
\end{equation}
The algorithm for finding maximal abelian subgroups \cite{maxabelgroup} is described as follows
\begin{enumerate}
\item Define $G=\langle\sigma(b^1),...\sigma(b^d)\rangle$. 
\item Now compute the center $C$ of $G$ and set $A:=C$. Since $A$ is abelian $A\subseteq C_G(A)$. 
\item If $C_G(A)=A$ then $A$ is a maximal abelian subgroup of $G$. Else, take an arbitrary $a\in C_G(A)\backslash A$. Define $A:=\langle A\cup \{a\}\rangle$ and repeat step 3.
\end{enumerate}
This procedure must halt as $G$ is finitely generated and $A$ keeps growing while still $A\subseteq G$ holds. As mentioned previously, it is known that there are maximally $m$ generators in $S$ \cite{Nqbitsnec}. Thus $S=\langle \tau_1,\tau_2,...\tau_r\rangle$ where $r \leq m$. \\

At this point in the proof, Ref.~\cite{taper} claims that in all considered examples the symmetry generators $\tau_1,...,\tau_r$ are $Z$-type
Pauli operators and in addition that a subset of qubits $q(1), . . . , q(k)$ can be chosen such that
\begin{equation}
\sigma_{q(i)}^{x} \tau_{j}=(-1)^{\delta_{i, j}} \tau_{j} \sigma_{q(i)}^{x} .
\end{equation}
However, in the following, we find that this claim is nontrivial, although we will show that the assumption that the generators are $Z$-type is unnecessary.
 
\begin{definition}{Generator transform}\\
\label{def:gentransform}
Let $i\in{1,...,r}$ and let $\tau_i$ be a generator of a symmetry group $S=\langle\tau_1,...,\tau_r\rangle$. Let $\tau_i$ anti-commute with a single qubit operator $\sigma_j^\rho$, where $j\in{1,...m}$ and $\rho\in\{X,Y,Z\}$. Define new operators $\tilde{\tau}_k$ as
\begin{equation}
    \tilde{\tau_k} = \begin{cases}
               \tau_k \quad\quad\quad \mathrm{if} \quad k=i \vee \tau_k\sigma_j^\rho=\sigma_j^\rho\tau_k \\
               \tau_i\tau_k \quad\quad \mathrm{else}
            \end{cases}.
\end{equation}
These new operators are linearly independent, mutually commuting and generate $S$ according to lemma 2. Furthermore,  these new generators were transformed in such a way that $\tilde{\tau}_i$ anti-commutes with $\sigma_j^\rho$ while all other $\tilde{\tau}_k$ commute with $\sigma_j^\rho$ according to lemma 3. 
\end{definition} 

\noindent For each $n\leq r$ the qubit reduction scheme requires a triple $\{A_n, B_n, V_n\}$ with $A_n=\{q(i)|i=1,...,n\quad i\not=j \rightarrow q(i)\not=q(j) \}$ a set of qubit-indices, $B_n=\{\rho(i)| i=1,...,n \quad \rho(i)\in\{X,Z\}\}$  a set of labels and $V_n=\{\nu_i|i=1,...,r, \quad \nu_i=\tau_1^{\alpha_1}...\tau_r^{\alpha_r}, \alpha_j\in \{0,1\} \}$ a set of generators of $S$. This triple $\{A_n, B_n, V_n\}$ should satisfy the predicate
\begin{equation}
\label{eq:predicament}
\begin{aligned}
    \Pi(\{A_n,B_n,V_n\}): &\quad \forall i\in \{1,..., n\} \hspace{0.5em}\forall j\in\{1,..., r\} \\
    &\sigma^{\rho(i)}_{q(i)}\nu_j=(-1)^{\delta_{i,j}}\nu_j\sigma^{\rho(i)}_{q(i)}
    \end{aligned}
\end{equation}
This requirement states that each qubit $i$ should anti-commute only with generator $i$.

\begin{lemma}
\label{lagrangelemma}
Assume that $\Pi(\{A_n,B_n,V_n\})$ is satisfied. Let $\tau_{n+1}=P_1P_2...P_m$ where $P_i\in\{I,X,Y,Z\}$. There exists $j\in\{1,...,m\}$ such that $j\not\in A_n$ and $P_j\not= I$.
\end{lemma}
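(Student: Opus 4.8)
The plan is to argue by contradiction, exploiting the fact that $S$ is abelian. Identifying $\tau_{n+1}$ with the generator $\nu_{n+1}\in V_n$ (so that the predicate $\Pi$ may be invoked on it, which presumes $n<r$), suppose the claim fails, i.e. $P_j=I$ for every $j\notin A_n$. Then $\tau_{n+1}$ is supported entirely on the already-used qubits, so $\tau_{n+1}=\prod_{i=1}^{n}P_{q(i)}$ up to phase, a product of single-qubit Pauli factors sitting on the distinct qubits $q(1),\dots,q(n)$. I would like to show that this representation is so rigid that $\tau_{n+1}$ is forced to be a nontrivial word in the $\sigma^{\rho(i)}_{q(i)}$, which then collides with abelianness.

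First I would feed the assumption into the predicate \refEquation{eq:predicament} with $j=n+1$. For every $i\le n$ we have $\delta_{i,n+1}=0$, so $\sigma^{\rho(i)}_{q(i)}$ commutes with $\tau_{n+1}$. Because $\sigma^{\rho(i)}_{q(i)}$ acts only on qubit $q(i)$, this global commutation collapses to a one-qubit statement: the factor $P_{q(i)}$ must commute with $\sigma^{\rho(i)}$. Since two single-qubit Paulis commute only when one is $I$ or the two coincide, and $\rho(i)\in\{X,Z\}$ is nontrivial, this pins down $P_{q(i)}\in\{I,\rho(i)\}$ (in particular $P_{q(i)}\ne Y$). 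Hence $\tau_{n+1}=\prod_{i\in T}\sigma^{\rho(i)}_{q(i)}$ up to phase, where $T=\{\,i\le n:P_{q(i)}=\rho(i)\,\}$.

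The key step is to rule out $T=\emptyset$ and then extract the contradiction. If $T$ were empty, $\tau_{n+1}$ would equal $\pm I$, which is excluded since $-I\notin S$ by the definition of a symmetry group and since the generators are linearly independent (so none is $+I$ either). Choosing any $k\in T$, I would evaluate the commutation of $\nu_k$ with $\tau_{n+1}=\prod_{i\in T}\sigma^{\rho(i)}_{q(i)}$ via the binary-vector commutation relation \refEquation{commutation}, whose exponent is the bilinear symplectic product of \refEquation{vectorproduct}. By bilinearity, $\nu_k$ anti-commutes with the product iff it anti-commutes with an odd number of factors; the predicate gives that $\nu_k$ anti-commutes with $\sigma^{\rho(i)}_{q(i)}$ precisely when $i=k$, and since the $q(i)$ are distinct there is exactly one such factor ($i=k\in T$). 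Thus $\nu_k$ anti-commutes with $\tau_{n+1}=\nu_{n+1}$, contradicting the commutativity of the two generators $\nu_k,\nu_{n+1}$ in the abelian group $S$.

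I expect the main obstacle to be the middle step rather than the final contradiction: the work is in using $\Pi$ to force each surviving factor into $\{I,\rho(i)\}$ and thereby realizing $\tau_{n+1}$ as a nonempty product of the single-qubit operators already attached to the generators. Once that structural fact is secured, the abelianness of $S$ finishes the argument and the parity bookkeeping is routine given the bilinearity of the symplectic product. A secondary point to handle carefully is the bookkeeping identification $\tau_{n+1}=\nu_{n+1}$, without which \refEquation{eq:predicament} would not apply to $\tau_{n+1}$, together with the use of $-I\notin S$ to guarantee $T\neq\emptyset$.
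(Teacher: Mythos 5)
Your proof is correct, but it finishes the argument by a genuinely different mechanism than the paper. Both proofs start from the same contradiction hypothesis ($P_j=I$ outside $A_n$) and both use the same first key observation: by \refEquation{eq:predicament} applied with $j=n+1$, each factor $P_{q(i)}$ must commute with $\sigma^{\rho(i)}$, hence lies in $\{I,\rho(i)\}$ --- this is exactly the paper's statement that ``in every position in $A_n$ only two operators out of $\{I,X,Y,Z\}$ are allowed,'' i.e.\ $|S^n_0|=2^n$. Where you diverge is the finish. The paper builds the chain of subgroups $S^n_0\supseteq S^n_1\supseteq\dots\supseteq S^n_n$ (adding, one at a time, the requirement of commuting with $\tau_1,\dots,\tau_n$), shows each inclusion is proper because $\sigma^{\rho(l+1)}_{q(l+1)}\in S^n_l\setminus S^n_{l+1}$, and invokes Lagrange's theorem to force $|S^n_n|=1$, contradicting $\tau_{n+1}\in S^n_n$. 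You instead use the structural fact directly: under the hypothesis, $\tau_{n+1}$ is (up to phase) a word $\prod_{i\in T}\sigma^{\rho(i)}_{q(i)}$ over a nonempty index set $T$ (nonemptiness from $-I\notin S$ and the nontriviality of generators), and then for any $k\in T$ the predicate says exactly one factor anti-commutes with $\nu_k$, so $\nu_k$ anti-commutes with $\tau_{n+1}$, contradicting that $S$ is abelian. Your route is more elementary --- no Lagrange counting, no subgroup chain --- and it is more explicit about \emph{why} the contradiction arises, exhibiting the concrete pair of generators that would fail to commute; the paper's counting argument is less constructive but packages the same information in standard stabilizer-theory style and does not need to single out a particular offending generator. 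Your care with the bookkeeping identification $\tau_{n+1}=\nu_{n+1}$ (needed to apply the predicate at all) is warranted; the paper relies on the same identification implicitly when it asserts $\tau_{n+1}\in S^n_n$ ``by construction.''
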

\begin{proof}
Proof by contradiction. Assume that $\forall i\not\in A_n: P_i=I$. Define the subgroup $S_l^n<P_m$ for $l\leq n$ as
\begin{equation}
\begin{aligned}
    &S_l^n\left(\{A_n,B_n,V_n\}\right)=\\
    &\{\sigma=P_1P_2...P_m |\hspace{0.5em} j\not\in A_n \Rightarrow P_{j}=I,\hspace{0.5em} \forall i\in\{1,..., l\} \\
    & \sigma\tau_i=\tau_i\sigma, \hspace{0.5em}  \forall w\in\{1,..., n\} \hspace{0.5em}  \sigma\sigma^{\rho(w)}_{q(w)}=\sigma^{\rho(w)}_{q(w)}\sigma\}
    \end{aligned}
\end{equation}
$S_l^n$ is a subgroup of $P_m$ as for every $l\leq n$ it is true that $I^{\otimes m}\in S_l$ and if $\sigma_a,\sigma_b\in S_l^n$ then $\sigma_a\sigma_b\in S_l^n$ (the inverses are of course trivially satisfied as every element is its own inverse). One can see that $\tau_{n+1}\in S^n_n$ by construction. From the condition $\forall w\in\{1,...,n\}\quad \sigma\sigma^{p_{q(w)}}_{q(w)}=\sigma^{p_{q(w)}}_{q(w)}\sigma$ it follows that $|S^n_0|=2^n$, as in every position in $A_n$ only two operators out of $\{I,X,Y,Z\}$ are allowed. It is easy to see that $S^n_{l+1}$ is a subgroup of $S^n_{l}$. Lagrange's theorem gives
\begin{equation}
|S^n_{l}|/|S^n_{l+1}|=|S^n_{l}/S^n_{l+1}|\in \mathbb{N}.    
\end{equation}

\noindent Note now that $\sigma^{\rho(l+1)}_{q(l+1)}$ is in $S^n_{l}$ but not in $S^n_{l+1}$, since it must anti commute with $\tau_{l+1}$. So, $|S^n_{l}|/|S^n_{l+1}|\geq 2$. Then $|S^n_{0}|/|S^n_{n}|\geq 2^n$. Therefore, $|S_{n}|=1$ and thus $S_{n}={I}$, the only group with one element. But by construction it was known that $\tau_{n+1}\in S^n_n$. Since $I$ can not be a generator a contradiction is reached. Therefore, there must indeed exist a $j\in{1,..., m}$ such that $j\not\in A_n$ and $P_{j}\not= I$. 
\end{proof}

\begin{theorem}
\label{the:5}
Let $S=\langle \tau_1,\tau_2,...\tau_r\rangle$ be a symmetry group. For every $n\leq r$ there exists a triple $\{A_n,B_n,V_n\}$ such that $\Pi(\{A_n,B_n,V_n\})$ is satisfied.
\end{theorem}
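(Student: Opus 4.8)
The plan is to prove the statement by induction on $n$, building the triple one qubit at a time. For the base case $n=0$ one takes $A_0=\emptyset$, $B_0=\emptyset$, and $V_0=\{\tau_1,\dots,\tau_r\}$ equal to the original generating set; the predicate $\Pi(\{A_0,B_0,V_0\})$ quantifies over the empty index set $i\in\{1,\dots,0\}$ and is therefore vacuously true. The work is entirely in the inductive step: assuming a triple $\{A_n,B_n,V_n\}$ with $V_n=\{\nu_1,\dots,\nu_r\}$ satisfying $\Pi$ for some $n<r$, I would construct $\{A_{n+1},B_{n+1},V_{n+1}\}$ that again satisfies $\Pi$.

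First I would locate the new qubit. Applying \refLemma{lagrangelemma} to the generator $\nu_{n+1}=P_1P_2\cdots P_m$ (which exists since $n+1\le r$) yields an index $j\notin A_n$ with $P_j\neq I$; I set $q(n+1):=j$, so that $A_{n+1}=A_n\cup\{q(n+1)\}$ again consists of distinct indices. Because $P_j\in\{X,Y,Z\}$ there is always a choice $\rho(n+1)\in\{X,Z\}$ for which $\sigma^{\rho(n+1)}$ anti-commutes with $P_j$ (take $Z$ if $P_j=X$, $X$ if $P_j=Z$, and either one if $P_j=Y$); since a single-qubit operator on qubit $q(n+1)$ anti-commutes with $\nu_{n+1}$ precisely when it anti-commutes with $P_j$ at position $j$, this makes $\sigma^{\rho(n+1)}_{q(n+1)}$ anti-commute with $\nu_{n+1}$. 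I set $B_{n+1}=B_n\cup\{\rho(n+1)\}$. To obtain $V_{n+1}$ I would apply the generator transform of \refDefinition{def:gentransform} with pivot $\nu_{n+1}$ and single-qubit operator $\sigma^{\rho(n+1)}_{q(n+1)}$, producing operators $\tilde\nu_k$ that still generate $S$, remain linearly independent and commuting, and satisfy $\sigma^{\rho(n+1)}_{q(n+1)}\tilde\nu_k=(-1)^{\delta_{n+1,k}}\tilde\nu_k\sigma^{\rho(n+1)}_{q(n+1)}$. This settles the row $i=n+1$ of the predicate.

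The main obstacle is to show that the generator transform does not destroy the $n$ pairings already established, i.e. that $\sigma^{\rho(i)}_{q(i)}\tilde\nu_k=(-1)^{\delta_{i,k}}\tilde\nu_k\sigma^{\rho(i)}_{q(i)}$ for every $i\le n$ and every $k$. The single-qubit operators themselves are unchanged, so only their commutation against the new generators must be checked. The key input is the inductive hypothesis specialized to $j=n+1$: it gives $\sigma^{\rho(i)}_{q(i)}\nu_{n+1}=\nu_{n+1}\sigma^{\rho(i)}_{q(i)}$ for all $i\le n$ (as $i\neq n+1$), so each old single-qubit operator commutes with the pivot. For the untouched generators $\tilde\nu_k=\nu_k$ the claim is then immediate from the hypothesis. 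For the transformed ones $\tilde\nu_k=\nu_{n+1}\nu_k$ I would invoke the multiplicativity of commutation used in the product mechanism of Lemma 3: since $\sigma^{\rho(i)}_{q(i)}$ commutes with $\nu_{n+1}$ and anti-commutes with $\nu_k$ exactly when $k=i$, it anti-commutes with the product $\nu_{n+1}\nu_k$ exactly when $k=i$. Thus the $\delta_{i,k}$ pattern survives the transform in both cases, which together with the already-verified row $i=n+1$ establishes $\Pi(\{A_{n+1},B_{n+1},V_{n+1}\})$ and closes the induction.
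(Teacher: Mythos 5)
Your proof is correct and follows essentially the same route as the paper's: induction on $n$, using \refLemma{lagrangelemma} to pick a fresh qubit index $q(n+1)\notin A_n$, choosing $\rho(n+1)\in\{X,Z\}$ to anti-commute with the nontrivial factor $P_j$, and then applying the generator transform of \refDefinition{def:gentransform} together with Lemmas 2 and 3. The only differences are cosmetic: you start from a vacuous base case $n=0$ where the paper runs an explicit base case $n=1$, and you spell out the verification (via Lemma 3 and the inductive hypothesis applied to the pivot $\nu_{n+1}$) that the transform preserves the previously established pairings --- a step the paper's proof leaves implicit.
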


\begin{proof}
The proof is given by induction over $n$
\\\\
\noindent \textbf{Base case $n=1$:} consider $\tau_1=P_1P_2...P_m$ where $P_i\in\{I,X,Y,Z\}$. Since $\tau_1$ is a generator there must be a $P_j\not=I$. The first qubit index is chosen as $q(1)=j$ thus $A_1=\{j\}$. The first label $\rho(1)$ is chosen as
\begin{equation}
    \rho(1)= \begin{cases}
               X \quad\quad \mathrm{if} \quad P_j\in\{Y,Z\}\\
               Z \quad\quad \mathrm{if} \quad P_j=X
            \end{cases}.
\end{equation}

\noindent Thus, $B_1=\{\rho(1)\}.$ The generator transform from definition~\ref{def:gentransform} can now be applied to the original generators of $S$ to yield new generators $\tilde{\tau}_1,...,\tilde{\tau}_r$ which form $V_1$. For ease of argumentation the new generators $\tilde{\tau}_1,...,\tilde{\tau}_r$ are renamed to $\tau_1,...,\tau_r$. Lemmas 2 and 3 now show that $\Pi(\{A_1,B_1,V_1\})$ is satisfied.\\

\noindent \textbf{Induction cases $2\leq n+1\leq r$}. 

\noindent From lemma~\ref{lagrangelemma} there exists a $j\in{1,...m}$ such that $j\not\in A_n$ and $P_j\not= I$. One can then choose $q(n+1)=j$ and 
\begin{equation}
    \rho(n+1)= \begin{cases}
               X \quad\quad \mathrm{if} \quad P_j\in\{Y,Z\}\\
               Z \quad\quad \mathrm{if} \quad P_j=X
            \end{cases}.
\end{equation}

\noindent The generator transform of definition~\ref{def:gentransform} can be applied to the generators in $V_n$ to yield new generators $\tilde{\tau}_1,...,\tilde{\tau}_r$. These generators form $V_{n+1}$. For ease of argumentation the new generators $\tilde{\tau}_1,...,\tilde{\tau}_r$ are renamed to $\tau_1,...,\tau_r$. $\Pi(\{A_{n+1},B_{n+1},V_{n+1}\})$ is now satisfied.
\end{proof}

\noindent The existence and construction of a triple $\{A_r,B_r,V_r\}$ such that $\Pi(\{A_r,B_r,V_r\})$ is satisfied has been proven.

\noindent Using the found triple $\{A_r,B_r,V_r\}$ one can define the operators $U_i$ for $i=1,...r,$ according to
\begin{equation}
    U_i=\begin{cases}
               \frac{1}{2}(\sigma^{z}_{q(i)}+\tau_i)(\sigma^{z}_{q(i)}+\sigma^x_{q(i)}) \quad \mathrm{if} \quad \rho(i)=Z\\
               \frac{1}{\sqrt{2}}(\sigma^{x}_{q(i)}+\tau_i) \quad\quad\quad\quad\quad\quad\hspace{0.2 em} \mathrm{if} \quad \rho(i)=X
            \end{cases}.
\end{equation}
Below certain properties of the defined operators $U_i$ are shown. Since Pauli matrices are Hermitian and unitary $\sigma^{\rho(i)\dagger}_{q(i)}=\sigma^{\rho(i)}_{q(i)}$, $\tau_i^\dagger=\tau_i$ and $\sigma^{\rho(i)\dagger}_{q(i)}\sigma^{\rho(i)}_{q(i)}=\tau_i^\dagger\tau_i=1$.  Using the implied commutation relations  the following holds 

\begin{equation}
\begin{aligned}
    &U_i^\dagger U_i=1, \quad U_i\sigma^x_{q(i)}U_i^\dagger=\tau_i,\\
    &U_i\sigma^x_{q(j)}U_i^\dagger=\sigma^x_{q(j)} \quad \mathrm{for} \quad j\not=i.
    \label{propertiesu}
\end{aligned}
\end{equation}

This allows for the definition of the unitary operator $U=U_1U_2...U_rW$, where $W$ is a permutation operator assigning qubit $i$ to qubit $q(i)$. As a combination of unitary operators $U$ is also unitary. The following relation follows from Eq.~\eqref{propertiesu} $U\sigma^x_{i}U^\dagger=\tau_i.$

The transformed Hamiltonian $H'$ can be defined as
\begin{equation}
    H'= U^\dagger HU=\sum_k h_k U^\dagger\boldsymbol{\sigma}^kU.
\end{equation}
Define $\boldsymbol{\eta}^k=U^\dagger\boldsymbol{\sigma}^kU$. Since $\forall i,j\in\{1,...,r\}$ $[\boldsymbol{\sigma^i},\tau_j]=0$ it must be that $\forall i,j\in\{1,...,r\}$ $ [\boldsymbol{\eta^i},\sigma^x_j]=0$. Therefore, $H'$ commutes with the last $r$ qubits and thus shares eigenvectors with $\sigma^x_{m-(r-1)},...\sigma^x_{m}$. The last $r$ qubits can thus be replaced by the $X$ operator eigenvalues $\pm1$.\\

Since the proof presented in this section is constructive, it should be feasible to implement this scheme in practice. For instance, in Ref.~\cite{taper} this scheme has been applied successfully to Hamiltonians of small molecules such as BeH$_2$ and HCl to reduce the number of qubits necessary to describe the problem by four qubits, which is two more than just those from electron and spin number. In the remainder of the paper, we focus on the importance of the role that entangling methods play in the convergence of the VQE. We analyze only relatively simple systems that posses no additional $\mathbb{Z}_2$ symmetries other than the electronic and spin symmetries present.
\newpage

\section{Multi-qubit entangling methods}
\label{sec:section5}
In the preparation of the trial state the entanglement operator $U_{ent}$ ensures that the full Hilbert space of trial states can be reached. In this section, a representative group of multi-qubit entangling methods will be described that will be used in the VQE simulations of Sec. VIII.  Some of the entanglers are gate based while others are native to system interactions \cite{a_otherVQE}. With regards to the latter, we study in particular the Rydberg quantum computing system and its accompanying Rydberg interaction \cite{rydberg1,rydberg2,Jaksch2000,morgado2020quantum}. The goal here is to show that entanglement gates based on these interactions can reach similar energy differences with the ground state as standard multi qubit entanglement gates in the form of combinations of CNOT gates. The advantage of these interaction based entanglement operations is that they are native to a system and thus do not require extensive constructions or highly specific schemes. To illustrate the comparable convergence between these methods we look at a small ensemble of interaction based/native entanglement operations and compare this to standard combinations of CNOT gates.

\begin{figure}[H]
\includegraphics[scale=0.50]{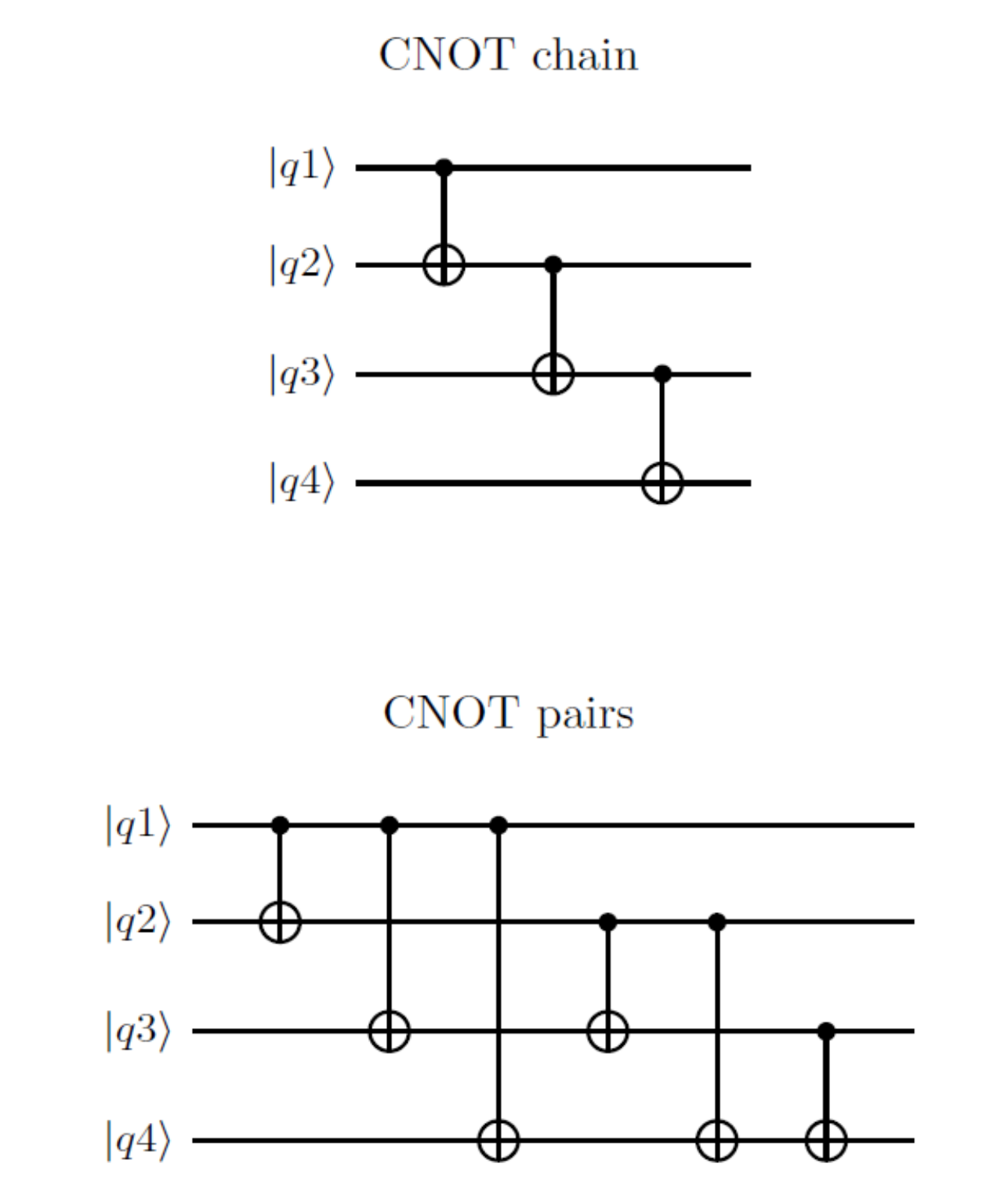}
\caption{Quantum circuit diagram of the CNOT chain and pairs methods for $m=4$.}
\label{fig:cnotchain}
\end{figure}

\newpage
The standard entanglement operator on two qubits is the CNOT gate. A way to entangle more than two qubits is by applying the CNOT gate in a chain or pairwise as is shown in Fig.~\ref{fig:cnotchain}.

Since qubit operations are prone to errors it is desirable to use as few as possible to entangle all qubits. The chain and pair methods respectively use $m-1$ and $m(m-1)/2$ CNOT operations for $m$ qubits. A way of reducing the number of operations is using entanglement interactions native to the qubit system. These interactions can entangle all qubits using only one operation and thus when considered as a single gate, are a more efficient way of entangling all qubits. Note that this efficiency only holds if the interaction is native to the qubit system, on other systems it might have to be constructed out of multiple other interactions.\\ 

A method proposed by Jaksch et al. \cite{Jaksch2000} uses the Rydberg interaction. The Rydberg interaction is a long range interaction which, when one Rydberg atom (qubit) in the excited state, blocks others from that state. Thus, it can function much like a CNOT gate. However, because of the long range interaction it also allows for a C$_{m-1}$NOT gate where one qubit controls all $m-1$ other qubits, thus multi-qubit entanglement is achieved \cite{HadamardRydberg, rydbergqubit}. Another physical multi-qubit entangling method is the Krawtchouk chain  \cite{Groenland}. The Krawtchouk chain is an entangling method based on the physical Hamiltonian of a Rydberg system which can be described as an interacting spin chain.  Such a spin chain could consist of trapped ions or ultra-cold Rydberg atoms \cite{Krawtchouk}. The Krawtchouk chain allows for the construction of the PST$_m$ gate which mirrors the left and right side of a chain of qubits and multiplies it by a factor $i$ (for instance $|01101\rangle\rightarrow i|10110\rangle$). This effect is often called perfect state transfer (PST).\\

The Krawtchouk chain is an important method to construct other multi-qubit gates.  Several studies have shown that the Krawtchouk Hamiltonian gate can be transformed, using only few one and two qubit rotations \cite{Groenland,Krawtchouk2}.  Two examples of these transformations are the iSWAP$_2$ gate and the C$_{m-1}$NOT gate. The iSWAP$_2$ gate swaps the last two qubits and multiplies the state by a phase $i$.  This swapping is controlled by the first $m-2$ qubits.\\

In the quantum chemistry problems considered, some combinations of orbitals interact more than others. Similarly, in some of these entanglement schemes, such as CNOT pairs and C$_{m-1}$NOT, the qubits play different roles, e.g. control versus target. This means that different assignments of orbitals to qubits may result in different convergence. Based on prior information on the problem from quantum chemistry, it might thus be beneficial to assign certain qubits to certain orbitals. Since this paper does not focus on initial chemistry problem analysis, we simply assign orbital 1 to qubit 1, etc.
\newpage

\section{Results}
\label{sec:section7}
In this section, we investigate the behaviour of the discussed multi-qubit entangling methods and classical optimization algorithms on VQE problems of H$_2$, LiH and H$_2$O. The simultaneous perturbation stochastic approximation (SPSA) algorithm, which is described in App.~\ref{app:appendixSPSA}, is used as the classical optimization algorithm throughout. For problems with a small number of qubits, such as considered here, the dimension of the state space is limited. In other more complex problems it may be advantageous instead to start off with a specific initial state to check if the VQE algorithm is able to improve upon this. For consistency, we choose the initial state to be the vacuum state $|\psi_{init}\rangle=|0...0\rangle$ in this work. When the found energies $E_{VQE}$ of two methods are compared to the ground state energy $E_g$, the method with the lower difference $E_{VQE}-E_g$ is said to outperform the other.

\begin{figure}[H]
\centering
  \includegraphics[width=\linewidth]{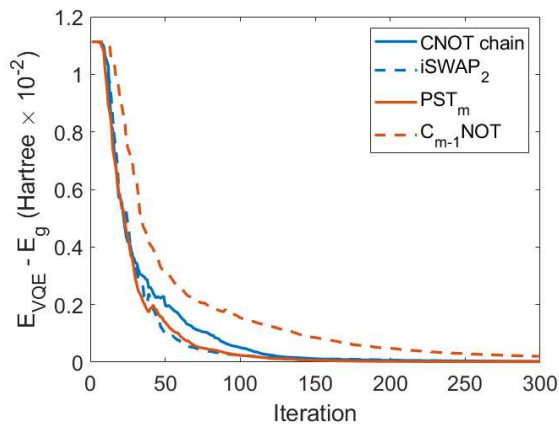}
\caption{A typical energy difference from the ground state energy versus iteration number for the entangling methods PST$_m$, C$_{m-1}$NOT and iSWAP$_2$ compared to the standard CNOT chain. The VQE problem describes a LiH molecule (6 qubits) at depth 6, with classical optimization SPSA.}
\label{fig:krawtchoukchainbased2}
\end{figure}

\begin{figure*}
\captionsetup{justification=raggedright}
    \begin{minipage}[t]{.49\textwidth}
         \includegraphics[width=0.89\columnwidth]{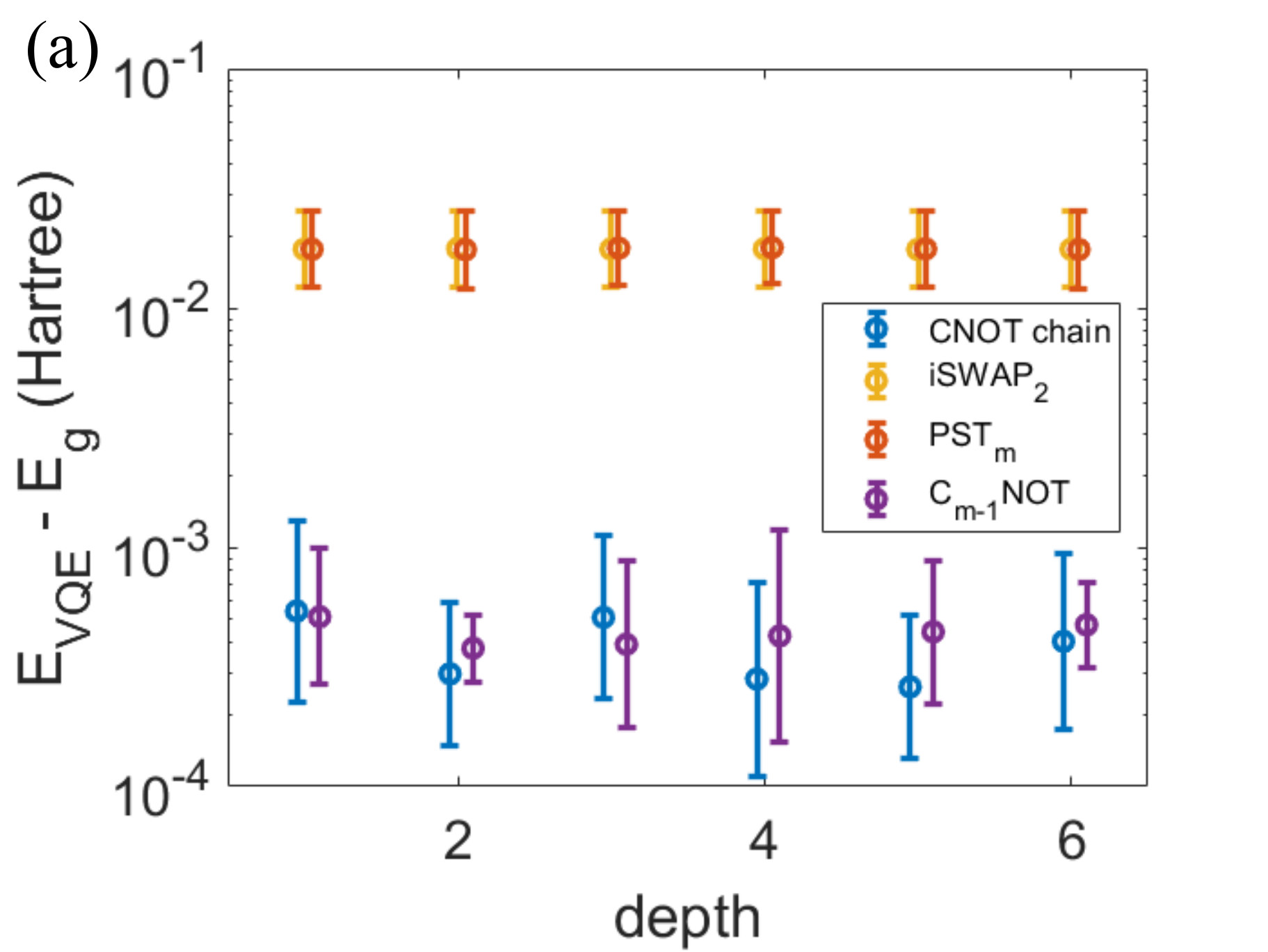}\\
    \end{minipage}%
    \begin{minipage}[t]{.49\textwidth}
         \includegraphics[width=0.89\columnwidth]{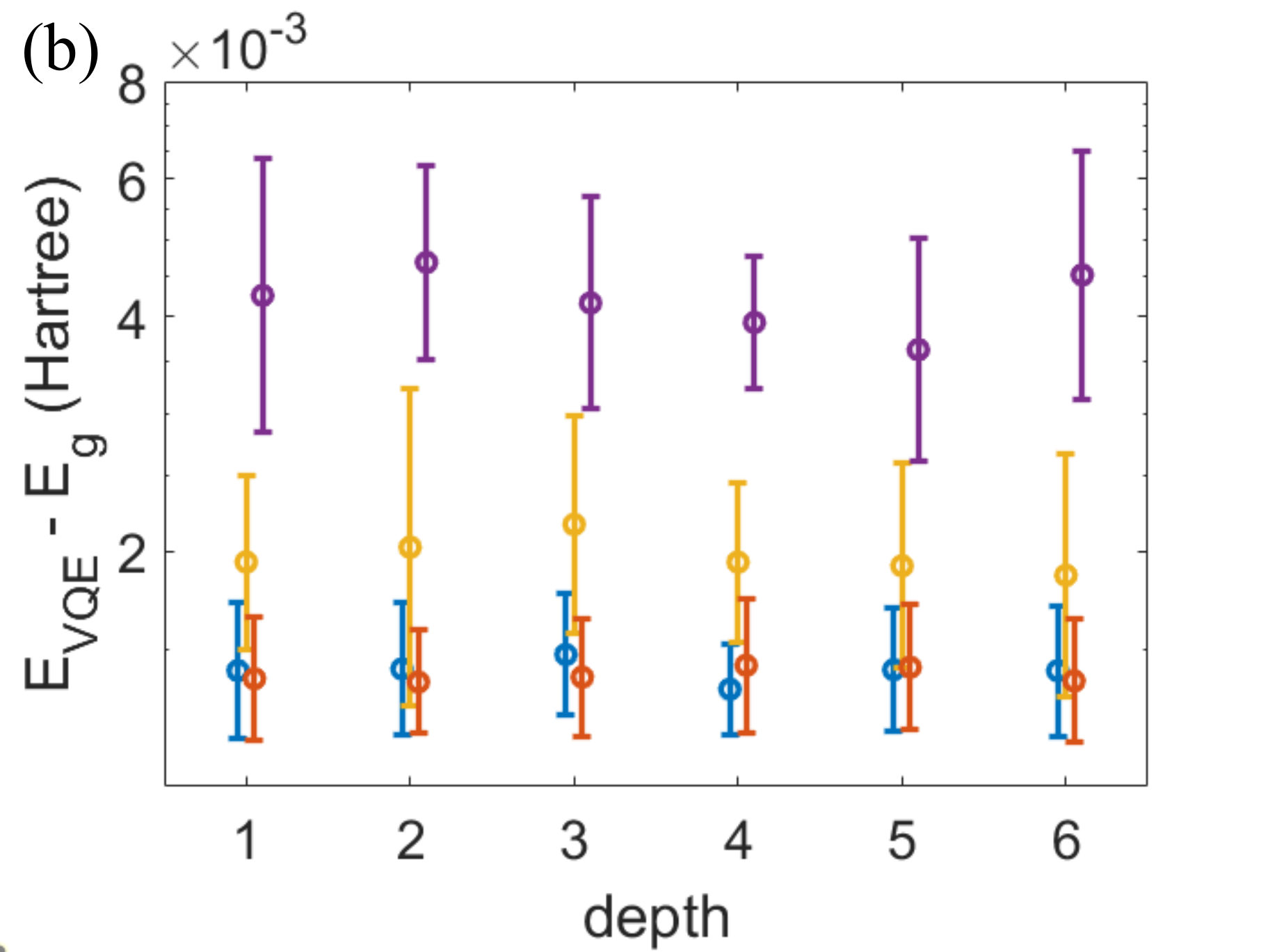}\\
    \end{minipage}%
\caption{Differences from the exact ground energy vs. depth for $10^3$ iteration SPSA simulations comparing the entangling methods for H$_2$ (a) and LiH (b) at depths 1 to 6. Energy differences are calculated  using the geometric average over several internal configurations of the molecule. The spread of each data point is given by the geometric standard deviation. Depth values are slightly offset for readability.}
    \label{fig:depthh2lih}
\end{figure*}

Specifically, H$_2$ is modelled with a 1s orbital for each atom resulting in a 4 qubit system which is reduced to a 2 qubit system by the $\mathbb{Z}_2$ symmetries. We run the simulations over  interatomic distances from 0.392$a_0$ to $0.931a_0$ in steps of 0.049$a_0$, with $a_0$ the Bohr-radius. The LiH molecule is aligned on the $x$-axis and we consider the $1s$, $2s$ and $2p_x$ orbitals on the Li atom and a $1s$ on the H atom resulting in 8 qubits which is again reduced to 6. Simulations are run over interatomic distances from 0.72$a_0$ to $1.71a_0$, in steps of 0.09$a_0$. For both molecules the resulting errors with the ground state energy are geometrically averaged over the chosen bond length.\\

To investigate the dependence of the VQE results on entangling methods, we first analyze the ground state of LiH for d=6 using the PST$_m$, iSWAP$_2$, C$_{m-1}$NOT and CNOT chain gates (discussed in Sec.~\ref{sec:section5}) as shown in Fig.~\ref{fig:krawtchoukchainbased2}.
Here it is found that methods based on the Rydberg interaction are able to perform comparably to the CNOT chain method (see Sec.~\ref{sec:section5}.).  Repeating these simulations for LiH and H$_2$ as a function of depth $d$ we get the resulting energy differences given in Fig.~\ref{fig:depthh2lih}. These results seem to further support that Rydberg interaction entangling methods can compare to the CNOT chain method in terms of performance. The optimal entangling method however appears to be problem dependent. The entanglement depth $d$ can open up the search space for the trial state allowing lower energy differences to be reached as can be seen in many of the LiH cases of Fig.~\ref{fig:depthh2lih}. However, it can also overcomplicate the problem by introducing too many rotation parameters and thus take more iterations to converge as we see in H$_2$ cases. Another interesting fact is seen for PST$_m$ and iSWAP$_2$ at H$_2$, where the entangling method is not fit for the problem regardless of the depth, and an energy difference as low as the CNOT chain and C$_k$NOT differences is not reached. \\ 

To see that the preference of one entangling method over the other is not the result of the geometric averaging of the energy difference we solve the VQE problem for H$_2$O (8 qubits) at bond length $0.99a_0$ with angles between the H atoms ranging from $\varphi=0.2\pi$ to $\varphi=0.7\pi$ in steps of $0.1\pi$ (the angle with the lowest ground state energy for H$_2$O is $\varphi\approx0.58\pi$). Doing so allows us to determine the influence of a slightly changed internal configuration of the molecule on the VQE convergence and entangler performance. The H$_2$O molecule lies in the $xy$-plane, the 1s orbital is filled for the O atom, the $2s$, $2p_x$ and $2p_y$ are active, and a 1s orbital is considered active on each of the H atoms. After reduction, this results in 8 qubits. Fig.~\ref{fig:h2oresults} shows the results per angle while Fig.~\ref{fig:h2oresultsaverage} shows the energy differences geometrically averaged over all angles together with the geometric standard deviation. The results in Fig.~\ref{fig:h2oresults} further support the claim that Rydberg based entanglers can perform as well as the CNOT chain. Furthermore, in Fig.~\ref{fig:h2oresults} it is seen that for all angles the iSWAP$_2$ and PST$_m$ gates tend to converge faster than the CNOT chain and C$_{m-1}$NOT gates. We also see that the C$_{m-1}$NOT method tends to converge the slowest. At the $\varphi=0.5\pi$ angle we see the CNOT chain getting stuck in a local minimum and not able to decrease the energy any further. The fact that iSWAP$_2$ and PST$_m$ perform consistently well at slightly different internal configurations shows that some entanglement methods are more suited to a certain molecular problem than others. 

\begin{figure*}
\captionsetup{justification=raggedright}
    \begin{minipage}[t]{.32\textwidth}
         \includegraphics[width=\columnwidth]{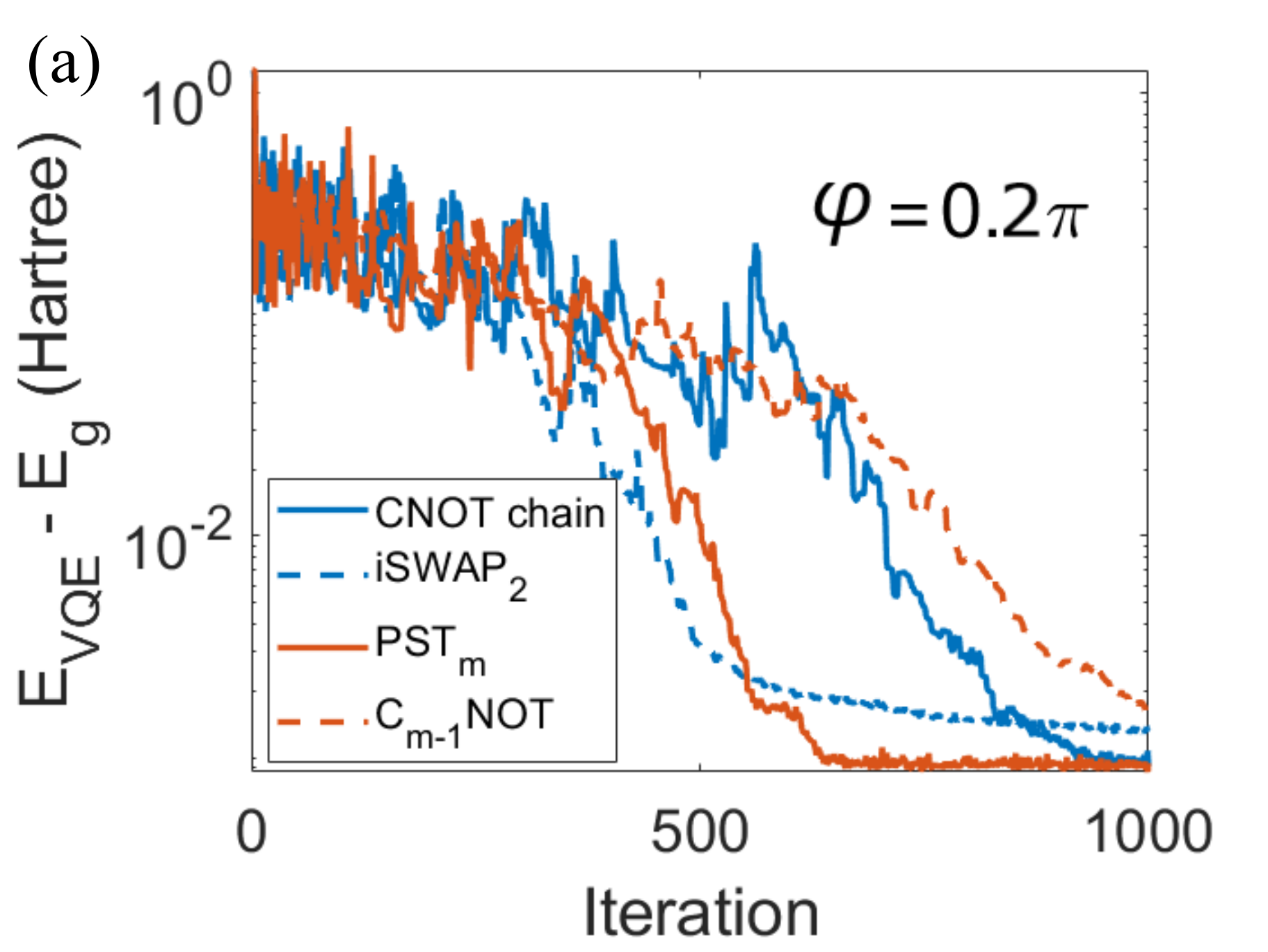}
    \end{minipage}%
    \begin{minipage}[t]{.32\textwidth}
         \includegraphics[width=\columnwidth]{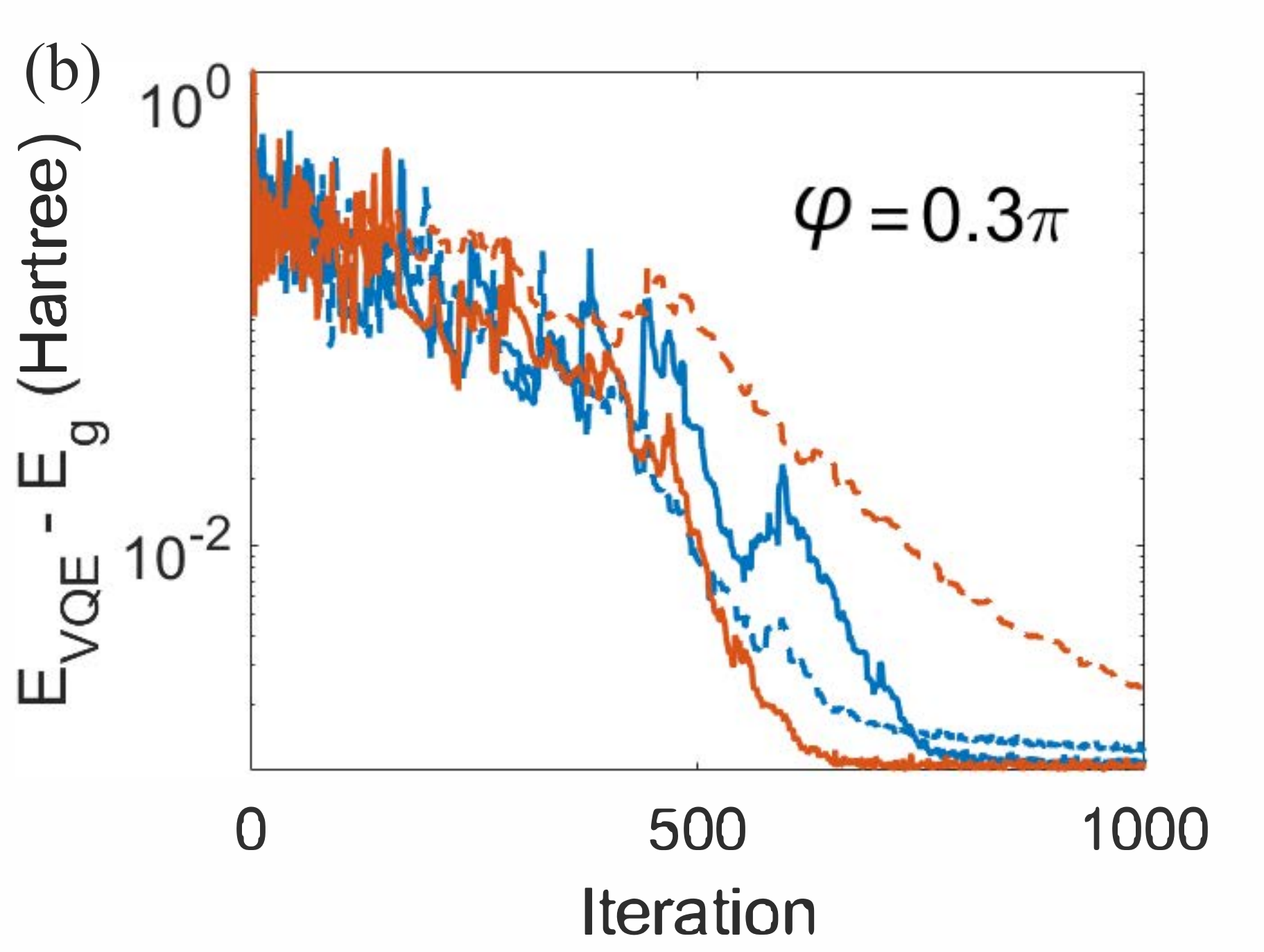}
    \end{minipage}%
    \begin{minipage}[t]{.32\textwidth}
         \includegraphics[width=\columnwidth]{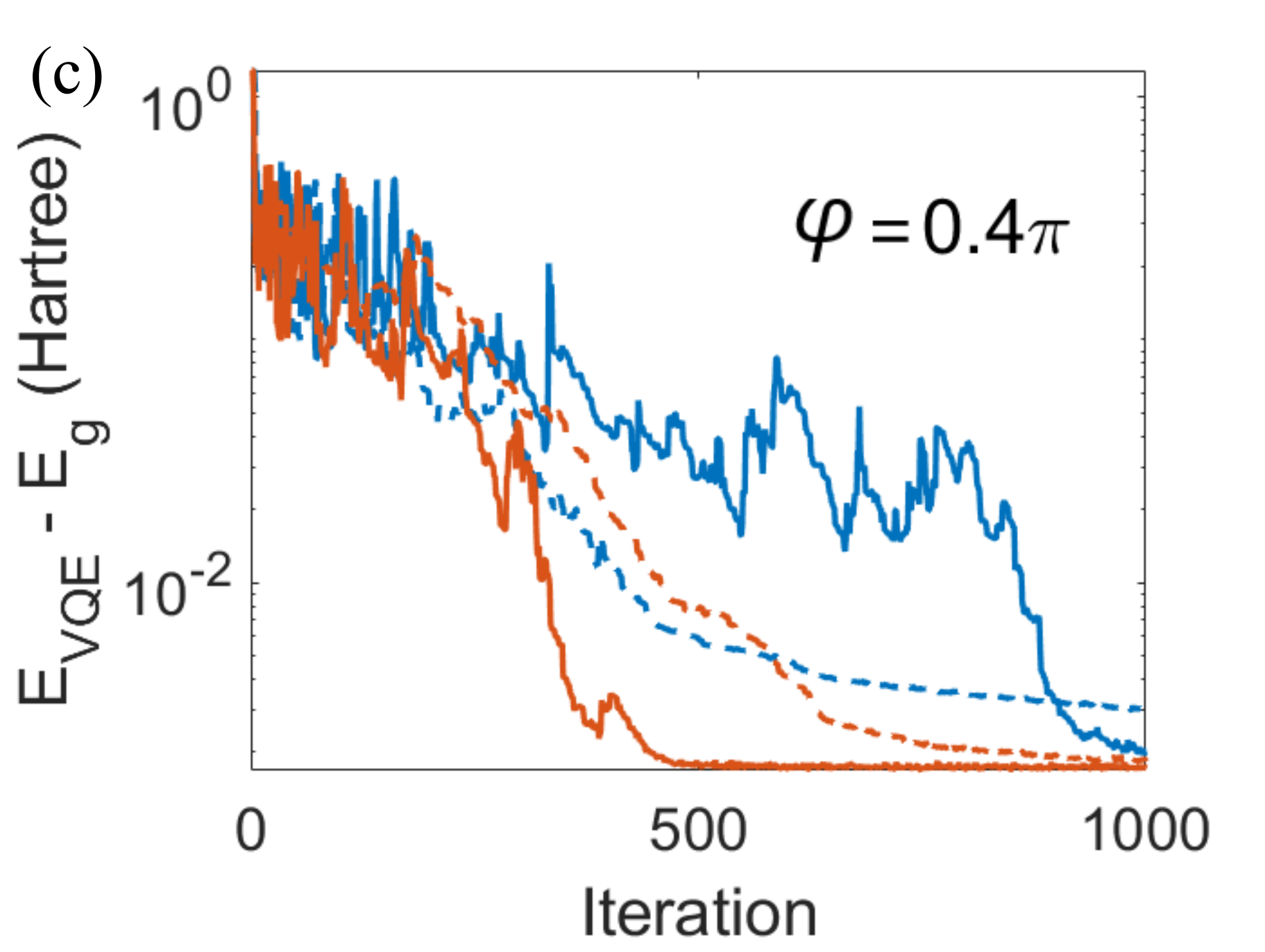}
    \end{minipage}%
    \\
    \begin{minipage}[t]{.32\textwidth}
         \includegraphics[width=\columnwidth]{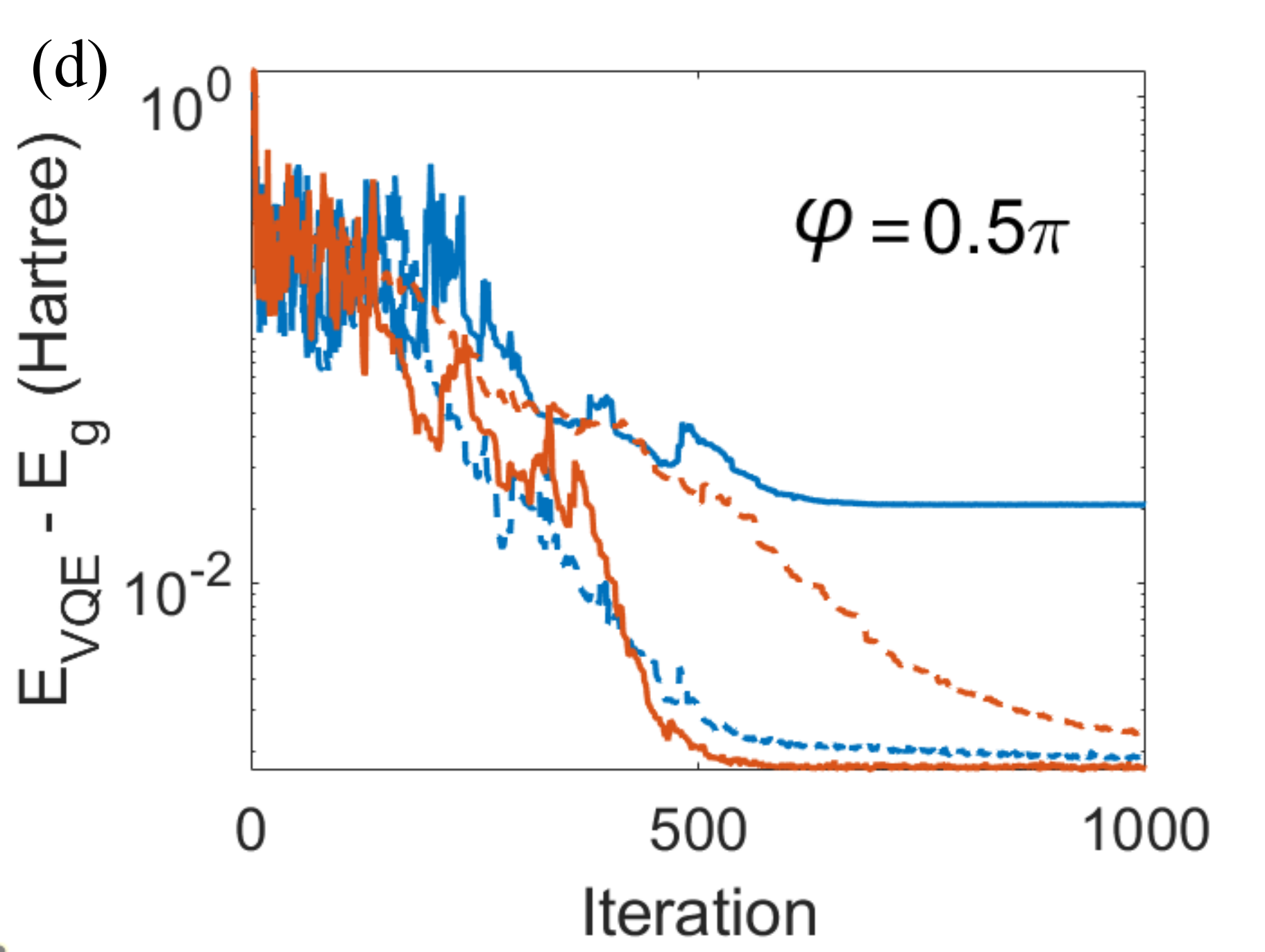}
    \end{minipage}%
    \begin{minipage}[t]{.32\textwidth}
         \includegraphics[width=\columnwidth]{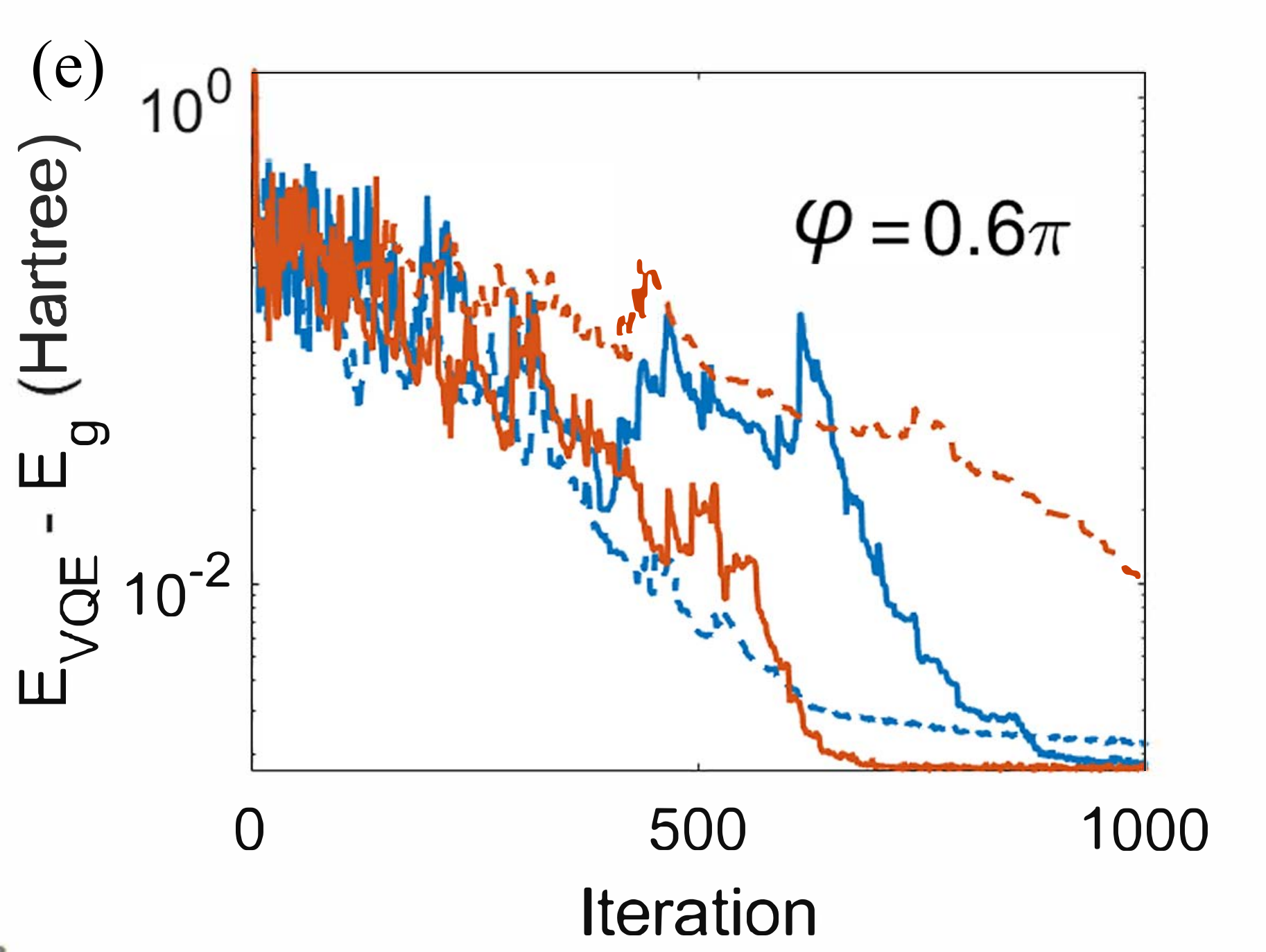}
    \end{minipage}%
    \begin{minipage}[t]{.32\textwidth}
         \includegraphics[width=\columnwidth]{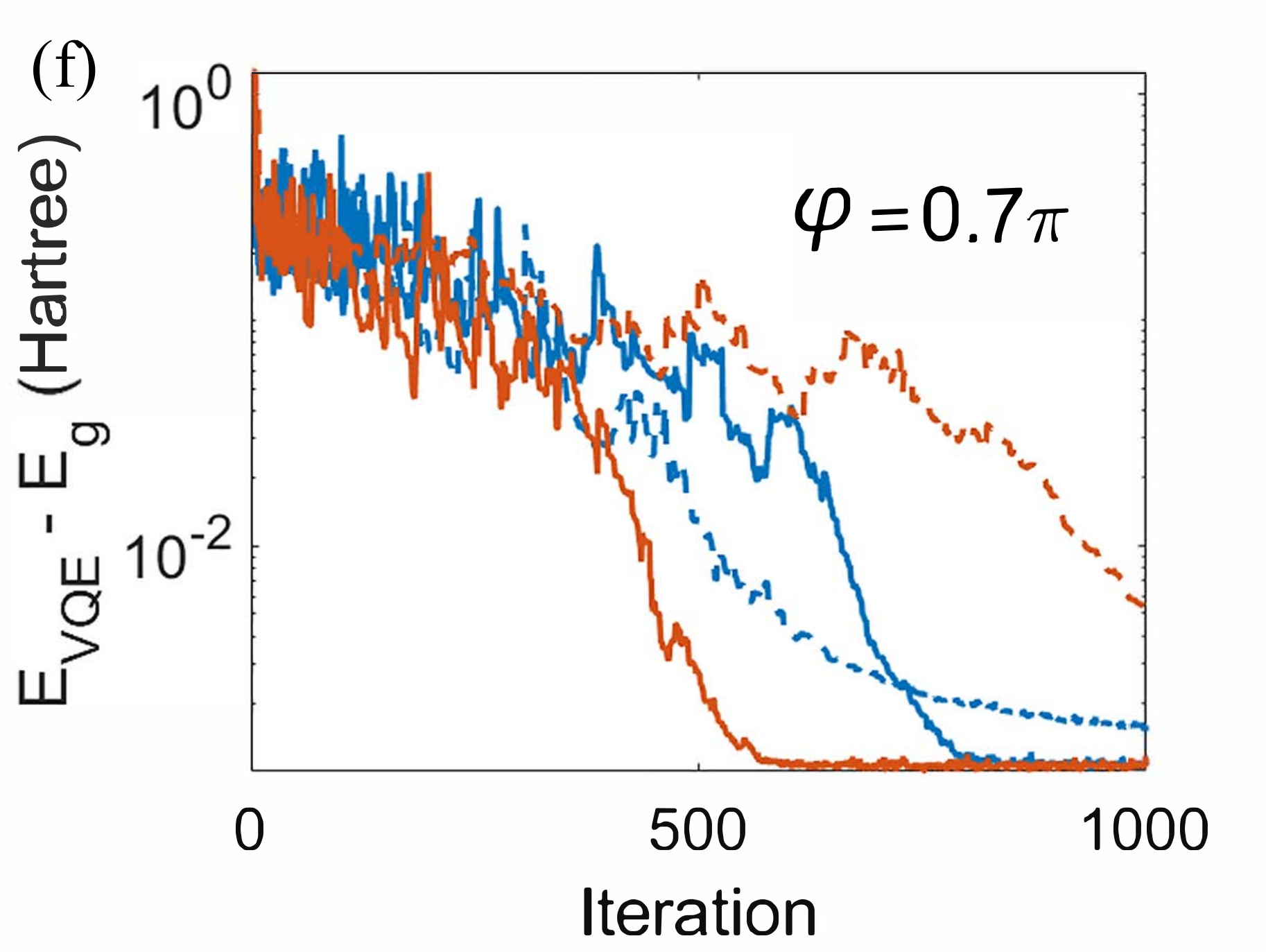}
    \end{minipage}
\caption{Iteration number versus energy difference with the 
    ground state energy for H$_2$O with depth $d=6$ at different angles $\varphi$ between the H atoms, optimized with the SPSA algorithm: (a) $\varphi=0.2 \pi$, (b) $\varphi=0.3 \pi$, (c) $\varphi=0.4 \pi$, (d) $\varphi=0.5 \pi$, (e) $\varphi=0.6 \pi$, and (f) $\varphi=0.7 \pi$.}
    \label{fig:h2oresults}
\end{figure*}

\begin{figure*}
\captionsetup{justification=raggedright}
    \begin{minipage}[t]{.49\textwidth}
         \includegraphics[width=\columnwidth]{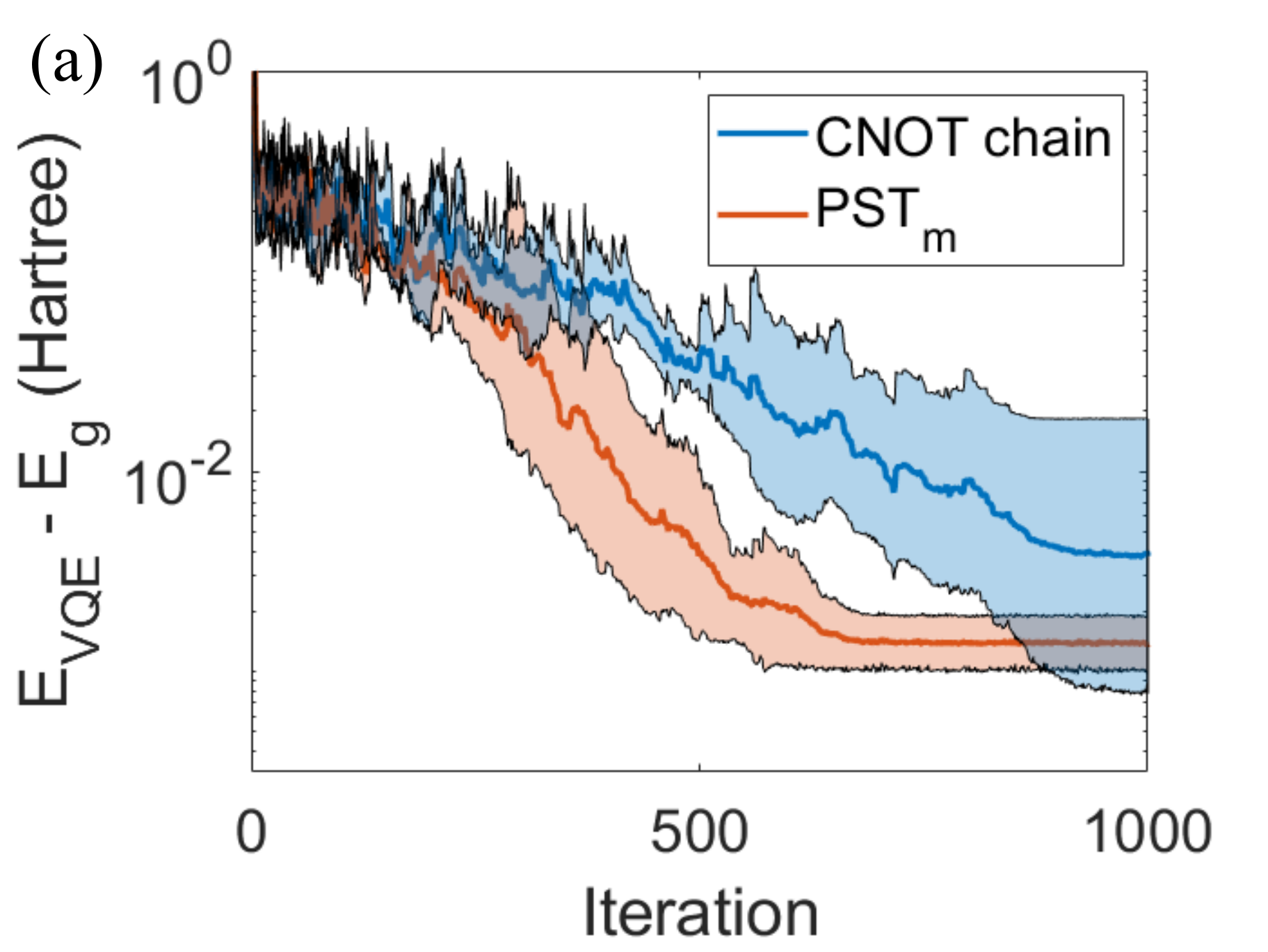}
    \end{minipage}%
    \begin{minipage}[t]{.49\textwidth}
         \includegraphics[width=\columnwidth]{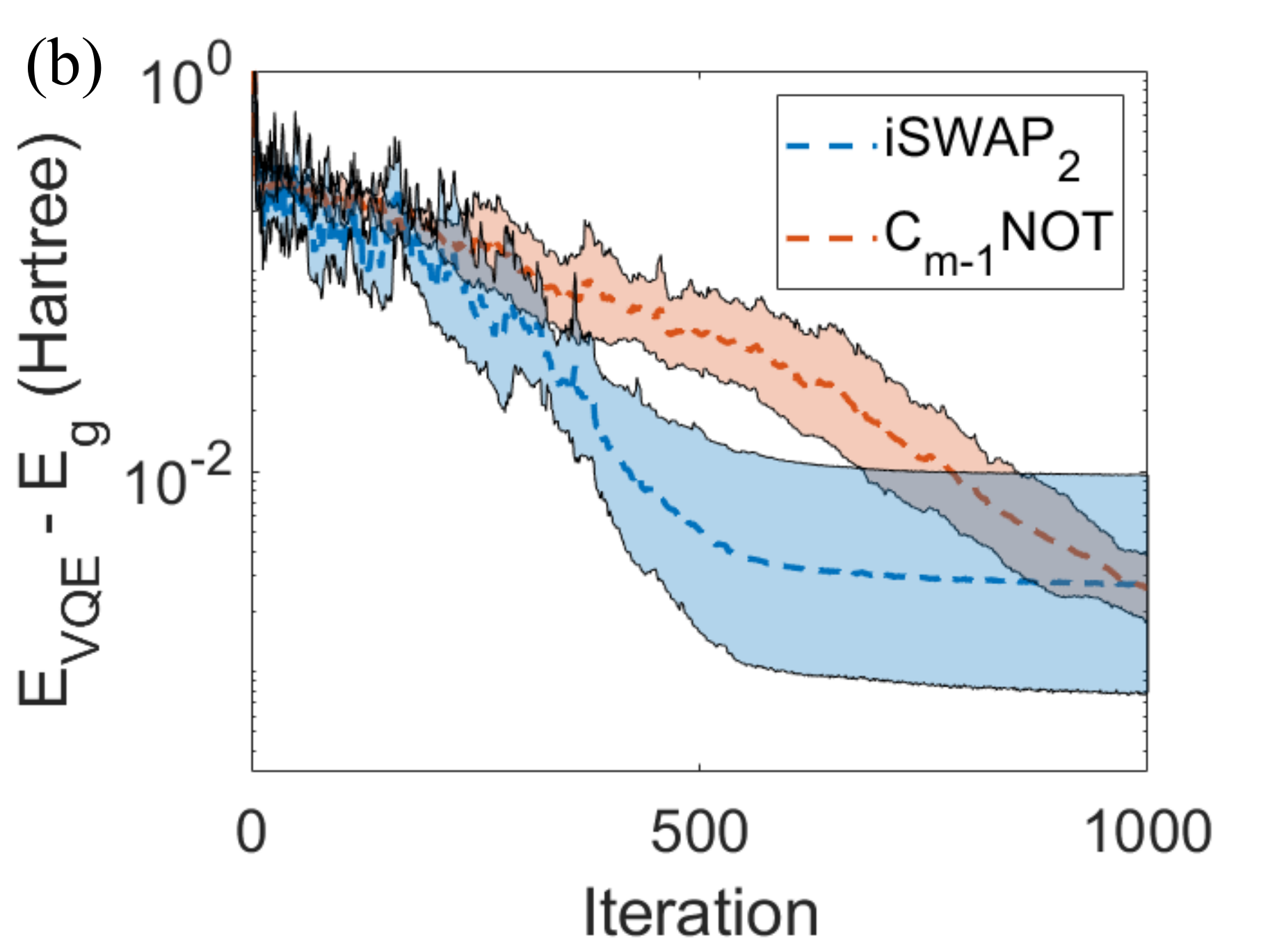}
    \end{minipage}%
\caption{Iteration number versus energy difference with the 
    ground state energy for H$_2$O with depth $d=6$ geometrically averaged over the different angles $\varphi$ between the H atoms, optimized with the SPSA algorithm. In (a) results are shown for the CNOT chain and PST$_m$ entanglement methods, in (b) results are shown for the ISWAP$_2$ and C$_{m-1}$NOT entanglement methods. Spread obtained via geometric variance.}
    \label{fig:h2oresultsaverage}
\end{figure*}

\newpage
\section{Conclusion}
\label{sec:section8}
In this paper, we have focused on optimizing the VQE algorithm for quantum chemistry applications. First,  a constructive proof was shown for a qubit reduction scheme using the $\mathbb{Z}_2$ symmetries of Hamiltonians, based on the paper by Bravyi et al. \cite{taper}, which effectively reduces the total number of necessary qubits. The multiple $\mathbb{Z}_2$ qubit reduction can thus be an important ingredient for reducing the total number of qubit manipulations for VQE problems.

After this the performance of the VQE algorithm as a function of the entangling method choice and depth was examined. Entangling methods based on both physical interactions, such as PST$_m$ and C$_{m-1}$NOT gave a convergence that was comparable to that of the CNOT chain method in the tested small molecule VQE problems. Looking at individual problems there seems to be preference for certain entangling operations over others: in H$_2$ the CNOT chain and C$_k$NOT performed well while in LiH and H$_2$O the C$_k$NOT and PST$_m$ gates were preferred. We thus see that even in VQE problems with low number of qubits the choice of entanglement operation plays an important role in the speed of convergence.
\\

To further investigate the link between entangling method and VQE problems, VQE calculations should be performed for more complicated molecular systems to see if a certain preference is maintained at a higher number of qubits. In this regime the barren plateau problem \cite{barrenplateau1} will have to accounted for. Another interesting point would be the prior analysis of the quantum chemistry problem to generate more optimal assignments from orbitals to qubits. Furthermore, our findings motivate future studies on the link between a molecular VQE problem and the entangling method. A measure of which entangling operation would do well prior to calculation would be valuable knowledge concerning the demands on hardware. Such a concept would be foundational towards the design of efficient VQE circuitry.

\onecolumngrid

\section*{Acknowledgements}
We thank Jasper Postema and Gijs Groeneveld for discussions. This research is financially supported by the Dutch Ministry of Economic Affairs and Climate Policy (EZK), as part of the Quantum Delta NL programme, and by the Netherlands Organisation for Scientific Research (NWO) under Grants No. 680-47-623 and 680-92-18-05.

\section*{Conflict of interest}
The authors have no conflicts to disclose.

\section*{Data Availability}
The data that support the findings of this study are available from the corresponding author upon reasonable request.

\newpage
\twocolumngrid

\newpage
\bibliographystyle{apsrev4-1}
\bibliography{Bibliography}
\onecolumngrid

\appendix
\label{app:appendix}
\section{Classical Optimization Method SPSA}
\label{app:appendixSPSA}
Alongside the multi-qubit entangling method the classical optimization method plays an important role in the VQE algorithm, see Fig.~\ref{fig:diagram}. This paper uses the SPSA method which is widely used in VQE research \cite{overview1,Kandala,overview2}\\

SPSA starts with an input initial trial state $|\Psi(\vec{\theta_0})\rangle$ with parameter vector $\vec{\theta_0}\in[0,2\pi]^{\otimes D}$. At every iteration step a small perturbation is done from the parameter vector towards a random direction to obtain two new parameter vectors $\vec{\theta}_{k,\pm}$=$\vec{\theta}_k+c_k\vec{\Delta_k}$. Here $\vec{\Delta_k}$ is a $(3d+2)m$-dimensional random vector that is generated by a sufficiently random generator and $c_k$ is a yet to be defined constant. In this paper $\vec{\Delta_k}$ is determined by the Rademacher distribution where every vector entry has a probability of $1/2$ of being either 1 or -1. The gradient $\vec{g_k}$ is estimated by
\begin{equation}
\label{eq:gradients}
    \vec{g_k}=\frac{\langle \Psi(\vec{\theta_{k,+}})|H|\Psi(\vec{\theta_{k,+}}) \rangle-\langle \Psi(\vec{\theta_{k,-}})|H|\Psi(\vec{\theta_{k,-}}) \rangle}{2c_k}\vec{\Delta_k}.
\end{equation}

\noindent To decrease the energy of the trial state at iteration $k$ a step against the gradient direction is taken such that 
\begin{equation}
    \vec{\theta}_{k+1}=\vec{\theta}_{k}-a_k\vec{g}_k,
\end{equation}

\noindent where $a_k$ is a weight factor dependent on the iteration step. The factors $a_k$ and $c_k$ are given by
\begin{equation}
    a_k=\frac{a}{k^A},\quad c_k=\frac{c}{k^\Gamma}.
\end{equation}
The constants $A, c$ and $\Gamma$ are best chosen to be 0.602, 0.01 and 0.101 respectively to guarantee the smoothest gradient descent \cite{spsacoefficients}.  The constant $a$ is dependent on the gradient around the initial position and is defined according to 
\begin{equation}
    a=\frac{2\pi}{5}\frac{c}{\left\langle|\langle \Psi(\vec{\theta}_{1,+})|H|\Psi(\vec{\theta}_{1,+}) \rangle-\langle \Psi(\vec{\theta}_{1,-})|H|\Psi(\vec{\theta}_{1,-}) \rangle|\right\rangle_{\vec{\Delta}_1}},
\end{equation}

\noindent where $\left\langle...\right\rangle_{\vec{\Delta}_1}$ is the expectation value over the distribution of $\vec{\Delta}_1$. This is the way the average slope around the initialization vector is calculated in order to calibrate the optimization process. In practice this expectation value is approximated using Monte-Carlo methods. A step is taken in the direction opposite of the gradient in order to, conceivably, reach a lower energy value. The SPSA algorithm is a local optimization method.\\

It should be noted that many other classical optimization techniques have been developed and applied to VQE in recent years such as DIRECT\cite{DIRECT1, DIRECT2} and SGD (c.f. Ref.~\cite{overview1} for a comprehensive overview). In particular the parameter shift rule for gradients allows for gradients as in Eq.~\eqref{eq:gradients} to be determined precisely by the QPU \cite{parametergradient}. However, since the focus of this work is on the convergence w.r.t the entanglement methods, we do not discuss further the details of these other methods.\\

Recent developments in VQE research have shown the issue of barren plateaus arising when the number of qubits (and thus the complexity of the problem) is increased \cite{barrenplateau1, barrenplateau2}. These plateaus are large flat parts of the parameter-energy landscape arising in VQE problems which can greatly decrease the effectiveness of a gradient descend method. Since our goal is to show how different entanglement methods and optimizers can influence the convergence of VQE problems even for problems with a small number of qubits this problem is not encountered in the present work. 

\end{document}